\long\def\hide#1{}
\newtheorem{theorem}{Theorem}[section]
\newtheorem{lemma}[theorem]{Lemma}
\newtheorem{claim}[theorem]{Claim}
\newtheorem{definition}{Definition}[section]
\newtheorem{exa}[theorem]{Example}
\newenvironment{proof}{\noindent{\bf Proof.}}{\hfill$\square$\medskip}
\newenvironment{proof*}[1]{\noindent{\bf Proof of #1.}}{\ $\square$\medskip}
\def\FullBox{\hbox{\vrule width 8pt height 8pt depth 0pt}}
\def\qed{\ifmmode\qquad\FullBox\else{\unskip\nobreak\hfil
\penalty50\hskip1em\null\nobreak\hfil\FullBox
\parfillskip=0pt\finalhyphendemerits=0\endgraf}\fi}
\newenvironment{proofof}[1]{\begin{trivlist} \item {\bf Proof
#1:~~}}
  {\qed\end{trivlist}}
\def\Var{\text {Var}}
\def\E{{\sf E}}\def\Var{{\sf Var}}
\def\eps{\varepsilon}
\def\prob{{\sf Prob}}
\def\R{\Re}
\begin{document}

\title{Principal Component Analysis and Higher Correlations for Distributed Data}

\author{Ravindran Kannan\\ Microsoft Research India\\ 
{\tt kannan@microsoft.com}\\ 
\and
Santosh S. Vempala\\ Georgia Tech \\
{\tt vempala@gatech.edu}
\and
David P. Woodruff \\ IBM Research Almaden \\
{\tt dpwoodru@us.ubm.com}}
\date{}

\maketitle

\begin{abstract}
We consider algorithmic problems in the setting in which the input data has been partitioned arbitrarily on many servers. The goal is to compute a function of all the data, and the bottleneck is the communication used by the algorithm. We present algorithms for two illustrative problems on massive data sets: (1) computing a low-rank approximation of a matrix $A=A^1 + A^2 + \ldots + A^s$, with matrix $A^t$ stored on server $t$ and (2) computing a function of a vector $a_1 + a_2 + \ldots + a_s$, where server $t$ has the vector $a_t$; this includes the well-studied special case of computing frequency moments and separable functions, as well as higher-order correlations such as the number of subgraphs of a specified type occurring in a graph. For both problems we give algorithms with nearly optimal communication, and in particular the only dependence on $n$, the size of the data, is in the number of bits needed to represent indices and words ($O(\log n)$). 
%
\end{abstract}


\section{Introduction}
In modern large-scale machine learning problems the input data is often distributed among many servers,
while the communication as well as time and space resources per server are limited. We consider
two well-studied problems:
(1) Principal Component Analysis (PCA), and (2) Generalized Higher-order correlations. 
Both problems study correlations between vectors. For the first problem, the vectors
correspond to the rows of a matrix and we are interested in second-order
correlations, while in the second problem we are interested in higher-order
correlations among the vectors. 

PCA is a central tool in many learning algorithms.
The goal of PCA is to find a low-dimensional subspace that captures as much
of the variance of a dataset as possible. By projecting the rows of a matrix
onto this lower-dimensional subspace, one preserves important properties
of the input matrix, but can now run subsequent algorithms in the 
lower-dimensional space, resulting in significant computational and storage
savings. In a distributed setting, by having each server first locally project
his/her own data onto a low-dimensional subspace, this can also result in
savings in communication. PCA is useful for a variety of downstream tasks, e.g., for
clustering or shape-fitting problems (\cite{fss13}) and latent semantic analysis. 

The second problem we consider is the Generalized Higher Order Correlation Problem. For this problem we assume server $t$ has an
$n$-dimensional vector $a_t$ with non-negative entries. Note that for PCA, it is useful and more general to allow the
entries to be positive,
negative, or zero. On the other hand, 
the non-negativity assumption for Generalized Higher Order Correlations is justified both by the applications we give
, as well as the fact that it is impossible to achieve low communication without this assumption, as described
in more detail below. 

A special case of this problem is the well-studied frequency moment
problem. That is, if server $t$ holds the vector $a_t$, with coordinates $a_{t1}, a_{t2}, \ldots, a_{tn}$, 
then the $k$-th frequency moment of $\sum_{t=1}^s a_t$ is $\sum_{i=1}^n (\sum_{t=1}^s a_{ti})^k$, where, $k$ is a positive integer.
This problem has been extensively studied in
the data stream literature, starting with the work of \cite{AMS99}. 
Known lower bounds for this problem from that literature 
rule out low communication algorithms when $k>2$ in the distributed setting when the number of 
servers grows as a power of $n$ (\cite{Bar-Yossef04, Chakrabarti03,g09}), 
or when there are only two servers and the entries are allowed to be negative \cite{Bar-Yossef04}. 
Here we overcome these lower bounds for smaller $s$ and indeed
will develop algorithms and lower bounds for estimating
$\sum_{i=1}^n f(\sum_{t=1}^s a_{ti})$, for a general class of functions $f:{\bf R}_+\rightarrow {\bf R}_+$.
 
We then extend these results to the following more general problem:
there is a collection of vectors that is
partitioned into $s$ parts - $W_1,W_2,\ldots ,W_s$ - and server $t$ holds $W_t$.
For each $t$ and each $i\in W_t$, there is an $n$-dimensional vector $v_i=(v_{i1},v_{i2},\ldots ,v_{in})$ wholly residing on server $t$.
Let $f:{\bf R}_+\rightarrow {\bf R}_+$
and $g:{\bf R}_+^k\rightarrow {\bf R}_+$ be 
functions. For a natural number $k$, define the $k$-th generalized moment $M(f,g,k)$ as
$$M(f,g,k) = \sum_{j_1,j_2,\ldots ,j_k\in [n] \, \text{distinct}} f\left(\sum_{i} g(v_{i,j_1},v_{i,j_2},\ldots ,v_{i,j_k})\right).$$

There are many applications of higher-order correlations, and we only mention several here. 
For a document collection, we seek statistics (second, third and higher moments) of the number of 
documents in which each trigram (triples of terms) occurs.
For a bipartite graph $G(V_1,V_2,E)$ and constants $(r,u)$, we want to estimate the number of $K_{r,u}$ 
(complete bipartite graph) subgraphs.
For a time series of many events, we want to estimate the number of tuples $(E_1,E_2,\ldots ,E_r; t_1,t_2,\ldots ,t_u)$ 
for which each of the events $E_1,E_2,\ldots ,E_r$ occurs at each of the times $t_1,t_2,\ldots ,t_u$. 

Conceptually, for each $i$, we can think of a vector $a_i$ with ${n\choose k}$ components -
one for each distinct tuple $(j_1,j_2,\ldots ,j_k)$. Suppose $a_{i; j_1,j_2,\ldots ,j_k}=g(v_{i,j_1},v_{i,j_2},\ldots ,v_{i,j_k}),$
and let $a_t=\sum_{i\in W_t} a_i.$
Our first theorem describes a way of estimating $M(f,g,k)$ up to a $(1+\eps)$-factor, where, each server uses
polynomial time and polynomial space, but we try to optimize total communication while keeping the number of rounds constant.
For this algorithm, server $t$ explicitly constructs the vector $a_t$ first, so it uses $O(n^k|W_t|)$ space.
Thereafter the space is linear in the total size of all the $a_t$.
Our second theorem shows how to reduce space to linear in $n$. This algorithm does not construct $a_t$ explicitly, but
instead performs a rejection sampling procedure. 

Before stating our theorems, we need some notation. Let $c_{f,s}$ be the least positive real number such that
\begin{equation}\label{cfs}
f(x_1+x_2+\cdots +x_s)\leq c_{f,s} (f(x_1)+f(x_2)+\cdots +f(x_s))\; \forall x_1,x_2,\ldots ,x_s\in {\bf R}_+.
\end{equation}
Note that for $f(x)=x^k$ (as in the $k$-th frequency moment), $c_{f,s}=s^{k-1}$, since for any non-negative real numbers
$b_1,b_2,\ldots ,b_s$, we have $(b_1+b_2+\cdots +b_s)^k\leq s^{k-1}(b_1^k+b_2^k+\cdots +b_s^k),$ and taking
$b_t=1$, we see that the factor $s^{k-1}$ cannot be improved.

\paragraph{Model.}
The communication and computations are not assumed to be synchronous. We arbitrarily denote one of the $s$ servers as the Central Processor (CP).
A round consists of the CP sending a message to each server and each server sending an arbitrary length message to the CP. A round
is complete when the CP has received messages from all servers from which it is expecting a message in that round.
All servers communicate only with the CP, which, up to a factor of two, is equivalent to the servers communicating directly with
each other (provided they indicate in their message who the message is being sent to). For formal details of this model, we refer the reader
to Section 3 of \cite{beopv13}. 
Our algorithms take polynomial time, linear space and $O(1)$ rounds of communication. 
%
\paragraph{Our Results}
\paragraph{Low-rank matrix approximation and approximate PCA.}
Our first set of results is for low-rank approximation: given an $n\times d$ matrix $A$, a positive integer $k$ and
$\varepsilon>0$, find an $n\times d$
matrix $B$ of rank at most $k$ such that
\[
||A-B||_F\leq (1+\varepsilon) \cdot  \min_{X: \text{rank}(X)\leq k} ||A-X||_F.
\]
Here, for a matrix $A$, the Frobenius norm $||A||_F^2$ is the sum of squares of the entries of $A$. A basis for the rowspace of $B$
provides an approximate $k$-dimensional subspace to project the rows of $A$ onto, and so is a form of approximate PCA. 
We focus on the frequently occurring case when $A$ is rectangular, that is, $n \gg d$.
\begin{theorem}\label{thm:low-rank-arbitrary}
Consider the arbitrary partition model where an $n\times d$ matrix $A_t$ resides in server $t$ and the data matrix
$A=A^1+A^2+\cdots +A^s$.
For any $1 \ge \eps > 0$, there is an algorithm that, on termination, leaves a
$n\times d$ matrix $C^t$ in server $t$ such that the matrix $C=C^1+C^2+\cdots +C^s$ is of rank $k$ and with arbitrarily large constant probability achieves
$
\|A-C\|_F \le (1+\eps) \min_{X: \text {rank}(X)\leq k} ||A-X||_F,$
using linear space, polynomial time and with total communication complexity $O(sdk/\eps + sk^2/\eps^4)$ real numbers.
Moreover, if the entries of each $A_t$
are $b$ bits each, then the total communication is $O(sdk/\eps + sk^2/\eps^4)$ words each consisting of $O(b + \log(nd))$ bits.
\end{theorem}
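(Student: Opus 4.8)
\medskip\noindent\textbf{Proof plan.} The plan is to compress $A$ by two nested oblivious sketches generated from short shared seeds, to solve the resulting tiny low-rank problem at the CP, and to broadcast a single $d\times k$ projector so that each server can write down its own summand $C^t$. Throughout put $\rho=\Theta(k/\eps)$ and let $A_k$ denote a best rank-$k$ approximation of $A$. \emph{Round 1.} The CP samples (from a short seed, which it broadcasts) a sketching matrix $S\in\mathbb{R}^{\rho\times n}$ of the kind for which the sketch-and-solve low-rank guarantee holds with $O(k/\eps)$ rows (e.g.\ a Gaussian matrix, or a CountSketch composed with a Gaussian so that the final row count stays $\Theta(k/\eps)$). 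Server $t$ forms $SA^t$ and sends it: $\rho\times d$ entries per server, $O(sdk/\eps)$ words in all. The CP forms $SA=\sum_t SA^t$ and lets $W\in\mathbb{R}^{d\times\rho'}$, $\rho'\le\rho$, be an orthonormal basis of $\mathrm{rowspace}(SA)$. By the standard guarantee (Sarl\'os; Clarkson--Woodruff), with arbitrarily large constant probability $\mathrm{rowspace}(SA)$ contains a $(1+\eps)$-optimal rank-$k$ row space, i.e.\ $\min\{\|A-X\|_F:\mathrm{rank}(X)\le k,\ \mathrm{rowspace}(X)\subseteq\mathrm{colspace}(W)\}\le(1+\eps)\|A-A_k\|_F$.

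\emph{Round 2.} It then remains to identify the best rank-$k$ approximation of $A$ with rows in $\mathrm{colspace}(W)$. Since $W$ has orthonormal columns this optimum is $(AW)_kW^\top$, with row-space basis $Y=W\widehat V_k$ where $\widehat V_k\in\mathbb{R}^{\rho'\times k}$ holds the top-$k$ right singular vectors of $AW$, equivalently the top-$k$ eigenvectors of the Gram matrix $(AW)^\top(AW)$. The CP cannot assemble $AW=\sum_t A^tW$ (it has $n$ rows), so in the second round it broadcasts $W$ (an extra $O(sd\rho)=O(sdk/\eps)$ words) together with a seed for an oblivious $(1\pm\eps)$-subspace embedding $T\in\mathbb{R}^{m\times n}$ with $m=\Theta(\rho/\eps^2)=\Theta(k/\eps^3)$ rows. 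Server $t$ returns $T(A^tW)$: $m\times\rho'$ entries per server, $O(sm\rho)=O(sk^2/\eps^4)$ words in all. The CP forms $TAW=\sum_tTA^tW$, sets $\widehat V_k$ to its top-$k$ right singular vectors, forms $Y=W\widehat V_k\in\mathbb{R}^{d\times k}$, and broadcasts $Y$ ($O(sdk)$ more words). Each server outputs $C^t:=A^tYY^\top$, so that $C=\sum_tC^t=AYY^\top$ has rank at most $k$.

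\emph{Correctness and resources.} Since $Y$ has orthonormal columns, $\|A-AYY^\top\|_F^2=\|A\|_F^2-\|AW\widehat V_k\|_F^2$. As $T$ is a $(1\pm\eps)$-embedding on $\mathrm{colspace}(AW)$, for the rank-$k$ subspace spanned by $\widehat V_k$ we get $\|AW\widehat V_k\|_F^2\ge(1-\eps)\|TAW\widehat V_k\|_F^2=(1-\eps)\|(TAW)_k\|_F^2\ge(1-\eps)\|TAWV_k^\star\|_F^2\ge(1-\eps)^2\|(AW)_k\|_F^2$, where $V_k^\star$ holds the true top-$k$ right singular vectors of $AW$. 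Combining with the Round-1 bound $\|A\|_F^2-\|(AW)_k\|_F^2\le(1+\eps)\|A-A_k\|_F^2$ and rescaling $\eps$ by a constant gives $\|A-C\|_F\le(1+\eps)\|A-A_k\|_F$. The total communication is $O(sdk/\eps+sk^2/\eps^4)$ words plus lower-order seed broadcasts; with a PRG or limited independence the sketches are samplable from $\poly(k/\eps)\cdot\log(nd)$-bit seeds, so these are indeed lower order. Each server uses time polynomial in its input and space linear in it; the CP's only nontrivial steps are an SVD of the $O(k/\eps)\times d$ matrix $SA$ and of the $\Theta(k/\eps^3)\times O(k/\eps)$ matrix $TAW$, both polynomial; there are two rounds. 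For the word-length claim, every transmitted matrix is a product of an integer (or rescaled-integer) sketch with a $b$-bit integer matrix, hence has $O(b+\log(nd))$-bit entries; the SVDs at the CP need be carried out only to enough bits that the induced perturbation is absorbed into $\eps$, which is legitimate since accuracy enters only through near-optimal (rather than exactly optimal) subspaces.

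\emph{Main obstacle.} I expect the delicate point to be the composition analysis: $S$ only certifies that \emph{some} good rank-$k$ row space lies inside $\mathrm{rowspace}(SA)$, while the CP must actually \emph{produce} one using nothing but the doubly sketched Gram matrix $(TAW)^\top(TAW)$, and one has to show the two approximations compose to $1+O(\eps)$ rather than to a $(1+\eps)^2$ with a drifting exponent or an additive term. This pins down what $T$ must do --- be a subspace embedding for the \emph{data-dependent} $\Theta(k/\eps)$-dimensional space $\mathrm{colspace}(AW)$, and, wherever relative Frobenius error is claimed, also satisfy an approximate matrix product bound --- and forces the parameter balancing $m=\Theta(k/\eps^3)$, $\rho=\Theta(k/\eps)$ that yields exactly the $sk^2/\eps^4$ term, all while keeping the sketches samplable from short seeds so the seed broadcasts stay lower order. (The matching $\widetilde\Omega(skd)$ lower bound, established separately, shows the $sdk$ term is unavoidable up to the $\eps$ and logarithmic factors.)
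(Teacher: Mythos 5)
Your algorithm is exactly the paper's {\sc AdaptiveCompress}: your $W$ is the paper's $U$ (an orthonormal basis for the row space of $SA$ with $S$ having $O(k/\eps)$ rows), your $T$ is the paper's $P$ (an $O(k/\eps^3)\times n$ subspace embedding for the column space of $AU$), and your output $C^t=A^tW\widehat V_k\widehat V_k^\top W^\top$ is the paper's $A_tUVV^TU^T$; the communication accounting also matches. However, the correctness chain you wrote for Round 2 has a genuine gap, and it is precisely the trap you flag in your ``main obstacle'' paragraph. You lower-bound the \emph{captured} energy, $\|AW\widehat V_k\|_F^2\ge(1-O(\eps))\|(AW)_k\|_F^2$, and then subtract from $\|A\|_F^2$. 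Unwinding this gives $\|A-C\|_F^2\le\bigl(\|A\|_F^2-\|(AW)_k\|_F^2\bigr)+O(\eps)\|(AW)_k\|_F^2$, and the second term is an additive error of order $\eps\|A\|_F^2$. When $A$ is close to rank $k$ one has $\|A-A_k\|_F^2\ll\|A\|_F^2$, so this additive term is not $O(\eps)\|A-A_k\|_F^2$ and the claimed relative-error bound does not follow. Multiplicative control of the captured mass is strictly weaker than multiplicative control of the residual.

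The fix (which is what the paper's Theorem \ref{thm:kv} does) is to apply the subspace embedding to the \emph{residual} directions rather than the top ones: writing $\hat v_1,\dots,\hat v_{\rho'}$ for the right singular vectors of $TAW$, one has $\|AW-AW\widehat V_k\widehat V_k^\top\|_F^2=\sum_{j>k}\|AW\hat v_j\|^2\le(1+\eps)^2\sum_{j>k}\|TAW\hat v_j\|^2=(1+\eps)^2f_k(TAW)^2\le(1+\eps)^4f_k(AW)^2$, where the last step embeds the true top-$k$ right singular space of $AW$ back through $T$. This is a multiplicative bound on the residual of $AW$, which then composes with the Round-1 guarantee $\|(AW)_kW^\top-A\|_F\le(1+O(\eps))\|A-A_k\|_F$ via the Pythagorean decomposition (splitting along $WW^\top$ and $I-WW^\top$), exactly as in equations (\ref{eqn:first})--(\ref{eqn:sixth}) of the paper. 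Two smaller points: your Round-1 inequality should have $(1+\eps)^2$ (or at least a squared norm consistently) on the right, and the $O(b+\log(nd))$-bit-word claim is not automatic for broadcasting the orthonormal basis $W$ computed at the CP --- the paper handles this by having servers work with $SA$ and $PA_t(SA)^\top$ directly and invert the change of basis $R$ locally, rather than by a perturbation argument on the SVD.
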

In contrast to the guarantees in Theorem \ref{thm:low-rank-arbitrary}, in the streaming model even with multiple passes, a simple encoding argument formalized in Theorem 4.14 of \cite{CW09} shows the problem requires $\Omega(n+d)$ communication. We bypass this problem by allowing the $s$ different servers to locally output a matrix $C_t$ so that $\sum_t C_t$ is a $(1+\eps)$-approximation to the best rank-$k$ approximation. We are not aware
of any previous algorithms with less than $n$ communication in the arbitrary partition model. 

In the row-partition model, in which each row of $A$ is held by a unique server, there is an $O(sdk/\eps)$ word upper bound due to \cite{fss13}. This is also achievable by the algorithms of \cite{gp13,bkl13,bklw14}. As the row-partition model is a special case of our model in which 
for each row of $A$, there is a unique server with a non-zero vector on that row, our result implies their result up to the low 
order $O(sk^2/\eps^4)$ term, but in a stronger model. For example, 
consider the case in which a customer corresponds
to a row of $A$, and a column to his/her purchases of a specific item. These purchases could be distributed across multiple servers corresponding to 
different vendors. Or in the case of search data, each column could correspond to a search term of a user, and the searches may be distributed
across multiple servers for storage and processing considerations. These examples are captured by the arbitrary partition model but not
by the row partition model. 

The technique for our upper bound is based on a two-stage adaptive sketching process, and 
has played an important role in several followup works, 
including CUR Matrix Factorizations of \cite{bw14} and subspace embeddings for the polynomial kernel by \cite{anw14}. 

We also show an $\tilde{\Omega}(skd)$ communication lower bound,
showing our algorithm is tight up to a $\tilde{O}(1/\eps)$ factor. The argument involves an upper bound showing
how a player can communication-efficiently learn a rank-$k$ matrix given only a basis for its row space. 

\begin{theorem}\label{thm:rank-k-lower-bound}
Suppose each of $s$ servers has an $n \times d$ matrix $A^i$ and the CP wants to compute a rank-$k$ approximation of
$A = \sum_{i=1}^s A^i$ to within relative error $\eps \geq 0$. The total communication required is $\tilde{\Omega}(skd)$ bits. 
Note that the lower bound holds for computing a $(1+\eps)$-approximation for any $\eps \geq 0$. 
\end{theorem}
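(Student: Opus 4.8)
The plan is to reduce from a hard two-party communication problem and then amplify over the $s$ servers and $d$ columns, with an auxiliary upper-bound lemma handling the ``learn the row space'' step. First I would recall the standard hard instance for linear-algebraic communication: a variant of \textsf{Index} or \textsf{Augmented-Index} on $\Theta(kd)$ bits, for which any (possibly randomized, constant-error) protocol requires $\tilde\Omega(kd)$ communication. The idea is that one player (say the CP, or server $1$) holds a random $k$-dimensional subspace $R\subseteq \mathbb R^d$ specified by an orthonormal $k\times d$ basis, encoded so that recovering $R$ to sufficient accuracy requires $\tilde\Omega(kd)$ bits; another server holds a single ``query'' row $a$ together with a large multiplier $T$. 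The joint matrix $A$ is designed so that its best rank-$k$ approximation (even up to relative error $\eps$, including $\eps=0$) forces the CP's output to reveal, coordinate by coordinate, the hidden bits — because the top singular directions of $A$ are dominated by the planted subspace $R$ while the query row pins down a specific linear functional of $R$.

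The key structural step, which the excerpt flags explicitly (``an upper bound showing how a player can communication-efficiently learn a rank-$k$ matrix given only a basis for its row space''), I would isolate as a lemma: if a server knows that the global answer matrix $C$ has rank $k$ and it already possesses an (approximate) orthonormal basis $U^\top$ for the row space of $C$, then $C$ is determined by the $n\times k$ coefficient matrix $CU$, and in the reduction $n$ can be taken to be $O(k)$ (or even $O(1)$ per relevant block), so the ``trivial'' cost of describing the answer is only $\tilde O(kd)$ and cannot by itself account for the lower bound — i.e., the $\tilde\Omega(skd)$ must come from the distributed hardness, not from output size. This lemma is what lets the reduction be tight: it rules out a cheap protocol that would otherwise evade the bound, and it converts ``CP learns a rank-$k$ approximation'' into ``CP learns the planted subspace,'' closing the loop with the $\tilde\Omega(kd)$ two-party bound.

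Next I would do the amplification. Partition the $d$ columns into $\Theta(1)$ blocks is not enough; instead run $s-1$ independent copies of the two-party instance in parallel, one per non-CP server, embedding the $i$-th instance into its own disjoint set of rows (or into a shared set of rows but orthogonal column-subspaces), so that the best rank-$k'$ approximation of the combined matrix decomposes across the copies and the CP must solve all $s-1$ of them. A direct-sum / information-complexity argument (in the spirit of Bar-Yossef et al. and the distributed lower bound framework of \cite{beopv13}) then shows the total communication is $\Omega(s)$ times the single-copy cost $\tilde\Omega(kd)$, giving $\tilde\Omega(skd)$; one has to be mildly careful that using rank parameter $k$ globally still leaves room for each of the $\Theta(s)$ sub-instances to carry a $\Theta(k/s)\cdot$(something) or, more cleanly, to take each sub-instance to have its own rank budget and set the global rank to $k$ with the planted pieces living in complementary subspaces so the SVD genuinely separates.

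The main obstacle I expect is exactly this separation/robustness point: ensuring that a $(1+\eps)$-approximate (indeed arbitrarily-good, since the claim covers $\eps\ge 0$) rank-$k$ approximation of the glued matrix still forces recovery of \emph{every} planted bit in \emph{every} block, rather than allowing the algorithm to ``spend'' its rank-$k$ budget on a few blocks and ignore the rest. Handling this requires choosing the spectral scales (the multipliers $T$, the norms of the planted subspaces, the noise level) so that the $k$ dominant singular directions of $A$ are forced to align, one per block, with the $s-1$ planted subspaces simultaneously, and that the residual error is so much smaller than the inter-block gaps that even a relative-error-$\eps$ solution cannot reshuffle mass between blocks. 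Making this quantitative — and compatible with the bit-complexity bookkeeping so the final bound is in bits — is where the real work lies; the rest is assembling the standard \textsf{Index}-type bound, the direct-sum theorem, and the row-space-learning lemma.
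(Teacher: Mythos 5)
Your overall strategy---reduce from a communication-hard problem and use an auxiliary ``learn the matrix from its row space'' step---matches the paper in spirit, but the amplification step has a genuine flaw and the auxiliary lemma is misidentified. The paper does not run $s-1$ independent two-party instances and apply a direct-sum theorem. That route cannot give $\tilde{\Omega}(skd)$ with a single global rank budget $k$: if the $s-1$ planted pieces live in complementary subspaces, the glued matrix has rank $\Omega(s)$ times the per-piece rank, so each piece can carry only rank $O(k/s)$ and the direct sum yields $\tilde{\Omega}(s\cdot (k/s)\cdot d)=\tilde{\Omega}(kd)$, losing exactly the factor of $s$ you need. Instead, the paper plants a \emph{single} $k$-dimensional subspace that depends on \emph{all} players' inputs: players $3,\dots,s$ hold random $k\times d$ binary matrices $A^i$, player 1 holds $I_d$ and player 2 holds $-I_d$, so $A=\sum_{i\geq 3}A^i$ is itself a $k\times d$ matrix of rank at most $k$. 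The best rank-$k$ approximation therefore has error $0$, and any $(1+\eps)$-approximation for any $\eps\geq 0$ must project exactly onto the row space of $A$---this is why the theorem covers $\eps=0$ for free and why none of your spectral-gap/robustness machinery (multipliers $T$, noise scales, inter-block gaps) is needed. The $\Omega(skd)$ hardness then comes directly from the known multiplayer lower bound for computing $\sum_i a^i$ mod $2$ \cite{PhilipsVZ12}, with no direct-sum or information-complexity argument.

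Second, the ``row-space-learning'' upper bound is not, as you suggest, a remark that the output size is only $\tilde{O}(kd)$ and hence cannot explain the bound; it is an active protocol that completes the reduction. Knowing the row space of $A$ alone does not solve the SUM problem---player 1 must recover $A$ itself. The paper shows that the row space of $A$ intersects $\{0,1,\dots,s-2\}^d$ in at most $(2s)^k$ points (by passing to $GF(p)$ for a prime $p$ near $s$), so each row of $A$ has only $(2s)^k$ candidates; the players then exchange $O(k\log s)$ random-sign fingerprints per row, i.e.\ $\tilde{O}(sk^2)$ bits in total, so that player 1 can identify the true row among the candidates with high probability. Hence a $C$-bit low-rank protocol yields a $\bigl(C+\tilde{O}(sk^2)\bigr)$-bit protocol for SUM, forcing $C=\tilde{\Omega}(skd)$ since $k<d$. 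Without this recovery step and the counting claim behind it, your reduction does not close.
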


\paragraph{Frequency moments and higher-order correlations.} Our next set of results are for estimating higher moments and higher-order correlations of distributed data. 

\begin{theorem}\label{freq-moments-thm-1}
Let $f:{\bf R}_+\rightarrow {\bf R}_+$ and $c_{f,s}$ be as in (\ref{cfs}). There are $s$
polynomial time, linear space bounded servers, where server $t$ holds a non-negative $n$-vector
$a_t = (a_{t1},a_{t2},\ldots ,a_{tn})$. We can estimate
$\sum_{i=1}^n f\left(\sum_{t=1}^s a_{ti}\right)$
up to a $(1+\varepsilon)$ factor by an algorithm using $O(s^2c_{f,s}/\varepsilon^2)$ total words of communication 
(from all servers) in $O(1)$ rounds. 
Moreover, any estimation up to a $(1+\varepsilon)$ factor needs in the worst case $\Omega(c_{f,s}/\varepsilon )$ bits
of communication.
\end{theorem}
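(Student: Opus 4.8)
The plan is to obtain the upper bound from a sampling estimator and the matching lower bound by reduction to multi-party set disjointness.

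\textbf{Upper bound.} Write $x_i=\sum_{t=1}^s a_{ti}$, $q_i=\sum_{t=1}^s f(a_{ti})$, $F=\sum_{i=1}^n f(x_i)$ (the quantity to estimate), $F_t=\sum_{i=1}^n f(a_{ti})$, and $P=\sum_{t=1}^s F_t$. In the first round every server sends $F_t$ (one word), so the CP learns $P$. Applying~(\ref{cfs}) to $(a_{1i},\ldots,a_{si})$ gives $f(x_i)\le c_{f,s}\,q_i$ for each $i$, and since $f$ is non-decreasing, $a_{ti}\le x_i$ yields $F_t\le F$ for each $t$ and hence $P\le sF$. The CP then samples, locally from the $F_t$'s, a server $T$ with $\Pr[T=t]=F_t/P$ and asks it for a coordinate; server $T$ returns $I$ with $\Pr[I=i\mid T]=f(a_{Ti})/F_T$, found in one streaming pass, so that $\Pr[I=i]=q_i/P$. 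The CP broadcasts $I$, collects $a_{tI}$ from all servers, forms $x_I,q_I,f(x_I)$, and outputs $Y=P\,f(x_I)/q_I$. Then $\E[Y]=\sum_i \frac{q_i}{P}\cdot\frac{P f(x_i)}{q_i}=F$ and $\E[Y^2]=P\sum_i \frac{f(x_i)^2}{q_i}\le c_{f,s}P\sum_i f(x_i)=c_{f,s}PF\le s\,c_{f,s}F^2$, so $\Var(Y)\le s\,c_{f,s}F^2$. Averaging $m=\Theta(s c_{f,s}/\eps^2)$ independent copies and invoking Chebyshev gives a $(1\pm\eps)$-estimate with probability $\ge 8/9$ (boostable to $1-\delta$ by a median of $O(\log\frac1\delta)$ such averages). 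Each sample costs $O(s)$ words (broadcast one index, one word back per server), everything runs in $O(1)$ rounds and, per server, in polynomial time and $O(n)$ space, and the total communication is $O(sm)=O(s^2 c_{f,s}/\eps^2)$ words.

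\textbf{Lower bound.} By definition of $c_{f,s}$ there are $u_1,\ldots,u_s\ge 0$ with $f(u_1+\cdots+u_s)\ge(1-o(1))\,c_{f,s}\sum_t f(u_t)$. I would reduce from $s$-party set disjointness on a universe $[m]$ with $m=\Theta(s\,c_{f,s}/\eps)$, in the structured (Tribes-style) form in which NO instances have pairwise-disjoint sets covering a fixed number of elements and YES instances have in addition a single element common to all $s$ players; its randomized communication complexity is $\Omega(m/s)$ (\cite{Bar-Yossef04,g09}). Encode an instance by $a_{ti}=u_t$ if coordinate $i$ is in server $t$'s set and $a_{ti}=0$ otherwise. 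Then $\sum_i f(x_i)$ equals a ``background'' of size $\Theta\!\bigl(\tfrac{m}{s}\sum_t f(u_t)\bigr)$, essentially the same for all NO instances, plus an extra $f(\sum_t u_t)\approx c_{f,s}\sum_t f(u_t)$ precisely when a common element exists; for $m=\Theta(s\,c_{f,s}/\eps)$ this extra term is a $(1+\Omega(\eps))$ multiplicative perturbation, so a $(1+\eps)$-estimator decides disjointness and the bound is $\Omega(m/s)=\Omega(c_{f,s}/\eps)$ bits. (For $f(x)=x^k$, take all $u_t=1$ to recover $\Omega(s^{k-1}/\eps)$.)

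\textbf{Main obstacle.} For the upper bound the crux is the second-moment bound $\E[Y^2]\le s\,c_{f,s}F^2$: the step $f(x_i)\le c_{f,s}q_i$ is exactly~(\ref{cfs}), but one also needs $P\le sF$ to keep the relative variance $\E[Y^2]/F^2$ bounded by $\poly(s,c_{f,s})$, and it is here that a structural property of $f$ (monotonicity) is used; without such a bound the estimator is not communication-efficient. For the lower bound the delicate point is forcing $\sum_i f(x_i)$ to be concentrated over the NO instances, so that the unique common element of a YES instance is the \emph{only} possible source of a $(1+\Omega(\eps))$-factor change in the value; this is precisely what the Tribes-style (fixed set sizes) version of multi-party disjointness provides, and why the vanilla version does not suffice.
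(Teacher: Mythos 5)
Your proof is correct and follows essentially the same route as the paper: the identical two-stage importance-sampling scheme (sample a server proportional to $F_t$, then an index proportional to $f(a_{ti})$, giving $\Pr[I=i]=q_i/P$) with the same estimator $Y=Pf(x_I)/q_I$ and the same variance bound $\E[Y^2]\le c_{f,s}PF\le s\,c_{f,s}F^2$ via monotonicity, followed by the same reduction from $s$-party promise disjointness using the embedding $a_{ti}=u_t$ for $i\in S_t$. The only (harmless) difference is bookkeeping in the lower bound: the paper takes universe size $(c_{f,s}-1)/\eps$ and invokes an $\Omega(n)$ disjointness bound, whereas you take a universe larger by a factor of $s$ and invoke the $\Omega(m/s)$ form, both yielding $\Omega(c_{f,s}/\eps)$.
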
 

We remark that the lower bound applies to {\em any} function $f$ with parameter $c_{f,s}$, not a specific family of such functions.

\begin{theorem}\label{freq-moments-thm-2}
Let $f:{\bf R}_+\rightarrow {\bf R}_+$, $g:{\bf R}_+^k\rightarrow {\bf R}_+$
be monotone functions with $c_{f,s}$ as in (\ref{cfs}).
$k\in O(1)$ is a natural number
and let $M(f,g,k)$ be the generalized moment.
We can approximate $M(f,g,k)$ to relative error $\varepsilon $ by an algorithm with communication at most
$O(s^3c_{f,s}/\varepsilon^2)$ words in $O(1)$ rounds. Further, we use polynomial time and
linear space.
\end{theorem}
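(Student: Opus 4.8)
The plan is to recognize $M(f,g,k)$ as an instance of the quantity estimated in Theorem~\ref{freq-moments-thm-1} and then to re-implement the per-server steps of that algorithm in space linear in $n$. For an ordered $k$-tuple $J=(j_1,\dots,j_k)$ of distinct indices, write $a_{t,J}=\sum_{i\in W_t} g(v_{i,j_1},\dots,v_{i,j_k})$. Then, up to the fixed factor $k!$ relating ordered to unordered tuples (handled at the end), $M(f,g,k)=\sum_{J} f\!\left(\sum_{t=1}^{s} a_{t,J}\right)$, which is precisely an instance of Theorem~\ref{freq-moments-thm-1} whose ``coordinates'' are the $k$-tuples and whose $t$-th vector is $(a_{t,J})_J$. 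Since the communication bound of Theorem~\ref{freq-moments-thm-1} does not depend on the number of coordinates, the only thing that needs re-examination is how a server manipulates its (exponentially long, never materialized) vector.

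Inspecting that algorithm, server $t$ needs only three primitives: (c) given a tuple $J$, return $a_{t,J}$ --- immediate in $O(|W_t|)$ time and $O(n)$ space by scanning $W_t$ and evaluating $g$; (a) report a $(1+\eps)$-approximation to $F_t:=\sum_J f(a_{t,J})$; and (b) on request, draw a tuple $J$ with probability close to $f(a_{t,J})/F_t$. Enumerating all $\binom{n}{k}$ tuples would give (a) and (b) exactly but is exactly what we want to avoid. Instead, using monotonicity of $f$ and $g$ together with the constant of (\ref{cfs}), we build a samplable over-estimate $\widetilde w_{t,J}\ge f(a_{t,J})$ whose total $\widetilde F_t:=\sum_J \widetilde w_{t,J}$ has a closed form computable in polynomial time and $O(n)$ space from per-element statistics of the $v_i$ (for the product-type $g$ in the applications, simple degree/count quantities), and which overshoots by at most an $O(s)$ factor, $\widetilde F_t=O(s)F_t$. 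Server $t$ then runs rejection sampling: draw $J\propto\widetilde w_{t,J}$, accept it with probability $f(a_{t,J})/\widetilde w_{t,J}\le 1$ (using primitive (c) to evaluate $f(a_{t,J})$), and repeat until acceptance. The accepted $J$ has distribution exactly $f(a_{t,J})/F_t$, the empirical acceptance rate times $\widetilde F_t$ is an unbiased estimator of $F_t$, and the expected number of trials per accepted sample is $\widetilde F_t/F_t=O(s)$, so this stays within polynomial time and $O(n)$ space.

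With the primitives available, the protocol is that of Theorem~\ref{freq-moments-thm-1}: in one round the CP collects the (approximate) $\widetilde F_t$, picks a server $t$ with probability $\propto\widetilde F_t$, obtains a sampled tuple $J$ from it, and in one more round collects $a_{t',J}$ from every server $t'$ and forms $\widehat X=f\!\big(\sum_{t'} a_{t',J}\big)/\widehat p_J$, where $\widehat p_J$ is the probability with which $J$ was produced; this is repeated independently, and the estimates are averaged with the usual median-of-means boost. The variance analysis parallels that of Theorem~\ref{freq-moments-thm-1}, except that sampling from the proxy $\widetilde w$ rather than from $f(a_{t,J})$ itself makes $\widehat X/M$ bounded by $O(s^2 c_{f,s})$ --- one factor of $s$ worse than in Theorem~\ref{freq-moments-thm-1} --- so $O(s^2 c_{f,s}/\eps^2)$ samples suffice. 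Each sample costs $O(s)$ words, since $J$ is specified by $O(k\log n)=O(\log n)$ bits and every server replies with a single word $a_{t',J}$; hence the total communication is $O(s^3 c_{f,s}/\eps^2)$ words in $O(1)$ rounds.

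The main obstacle I anticipate is the construction of the over-estimate $\widetilde w_{t,J}$: it must simultaneously (i) be summable in closed form and samplable in linear space without ever touching all $\binom{n}{k}$ tuples and (ii) be tight enough that the acceptance probability, and hence the sample complexity, degrades by only an $O(s)$ factor; proving the resulting estimator unbiased and controlling its variance, together with the (routine) ordered/unordered and ``distinct indices'' bookkeeping, are the remaining pieces.
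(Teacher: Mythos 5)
There is a genuine gap, and it sits exactly where you flagged it: the over-estimate $\widetilde w_{t,J}$ is never constructed, and for a \emph{general} monotone $g:{\bf R}_+^k\rightarrow{\bf R}_+$ (which is all the theorem assumes) there is no evident family of weights that is simultaneously summable in closed form from per-element statistics of the $v_i$, samplable in $O(n)$ space, and within an $O(s)$ factor of $f(a_{t,J})$. The constant $c_{f,s}$ of (\ref{cfs}) controls how $f$ splits across the $s$ servers; it says nothing about how $f(\sum_{i\in W_t} g(\cdots))$ relates to any product or per-coordinate surrogate, so monotonicity of $f$ and $g$ alone does not give you the handle you need. Your remark that this works ``for the product-type $g$ in the applications'' concedes that the argument as written proves a special case, not the theorem. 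Everything downstream (the unbiasedness of $\widehat X$, the extra factor of $s$ in the variance, the final $O(s^3 c_{f,s}/\eps^2)$ count) is conditioned on this missing construction.

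The deeper issue is that the detour is unnecessary: you dismissed enumeration of the $\binom{n}{k}$ tuples as ``exactly what we want to avoid,'' but the theorem only requires \emph{polynomial time} and \emph{linear space}, and with $k\in O(1)$ enumeration is polynomial time --- only materializing the length-$\binom{n}{k}$ vector in memory must be avoided. The paper's proof does precisely this: server $t$ streams over the tuples $j$ in lexicographic order, computes $a_{tj}$ by one pass over $W_t$, maintains a running total of $f(a_{tj})$, and replaces its current sample by the new tuple $j_0$ with probability $f(a_{tj_0})$ divided by the running total including $j_0$ (reservoir sampling). This yields an \emph{exact} sample from $\{f(a_{tJ})/C_t\}$ and the exact value of $C_t$ in linear space, after which Theorem \ref{freq-moments-thm-1} applies verbatim with no degradation in the variance. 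Your overall framing (reduce to Theorem \ref{freq-moments-thm-1} over tuple-indexed coordinates, note that its communication is independent of the number of coordinates, and reimplement the two per-server primitives in small space) is exactly right; the fix is to replace the rejection-sampling-against-a-proxy step with this streaming exact sampler.
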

A key feature of this algorithm, and our following ones, is worth noting: they involve {\it no dependence} on $n$ or $\ln n$,
so they can be used when $a_t$ are implicitly specified and $n$ itself is very large, possibly infinite
(provided, we can communicate each index $i$). In the theorem below $\Omega$ is the set of coordinates of each vector. It is analogous to
$[n]$. We use $\sum_{x\in\Omega}$, which when $\Omega$ is infinite and the probabilities are densities, should be replaced with an integral;
our theorem is also valid for the case when we have integrals.
\begin{theorem}\label{sampling}
Let $f:{\bf R}_+\rightarrow {\bf R}_+$, $g:{\bf R}_+^k\rightarrow {\bf R}_+$
be monotone functions with $c_{f,s}$ as in (\ref{cfs}).
Server $t$ is able to draw (in unit time)
a sample $x\in \Omega$ according to a probability distribution $h_t$ on $\Omega$. Also, server $t$ can
estimate $\sum_{x\in\Omega}f(h_t(x))$. Then with $O(s^3c_{f,s}/\varepsilon ^2)$ words of
communication, CP can estimate
$\sum_{x\in\Omega} f\left(\sum_{t=1}^s h_t(x)\right)$
to within relative error $\varepsilon $.
\end{theorem}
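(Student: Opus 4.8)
The plan is to build an importance-sampling estimator whose per-sample random variable is bounded in a narrow range, in the same spirit as the proofs of Theorems~\ref{freq-moments-thm-1} and \ref{freq-moments-thm-2}; indeed Theorem~\ref{sampling} is essentially the ``oracle'' version of Theorem~\ref{freq-moments-thm-2}, with the implicitly-defined non-negative vector $a_t$ replaced by the sampleable distribution $h_t$. Write $\sigma(x)=\sum_{t=1}^s h_t(x)$, so the quantity to estimate is $F=\sum_{x\in\Omega}f(\sigma(x))$, and put $S_t=\sum_{x\in\Omega}f(h_t(x))$, which by hypothesis server $t$ can estimate to within a $(1+\eps)$ factor; the CP collects these estimates $\hat S_t$ and forms $\hat S=\sum_t\hat S_t$ with $O(s)$ words. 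The elementary starting point is the sandwich
\[
\frac1s\sum_{t=1}^sS_t\ \le\ F\ \le\ c_{f,s}\sum_{t=1}^sS_t,
\]
where the right inequality is (\ref{cfs}) applied coordinate-by-coordinate with $x_t=h_t(x)$ and summed over $x$, and the left inequality uses monotonicity of $f$ (so $f(\sigma(x))\ge\max_tf(h_t(x))\ge\frac1s\sum_tf(h_t(x))$) and is again summed over $x$. Thus $F$ is already pinned down to a factor $s\,c_{f,s}$, and the degenerate case $\hat S=0$ (forcing $F=0$) is disposed of immediately.

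The refinement is importance sampling against the proposal $q(x)\propto\sum_tf(h_t(x))$: the CP picks a server $t$ with probability $\hat S_t/\hat S$ and instructs it to return a point $x$ drawn from $\mu_t\propto f(h_t(\cdot))$, so that $x$ has (approximately) law $q$. Writing
\[
Z(x)=\frac{f(\sigma(x))}{\sum_{t=1}^sf(h_t(x))},\qquad F=\Bigl(\sum_{t'}S_{t'}\Bigr)\,\E_{x\sim q}\!\bigl[Z(x)\bigr],
\]
the sandwich gives $Z(x)\in[\tfrac1s,c_{f,s}]$ pointwise and $\E_{x\sim q}[Z]=F/\sum_{t'}S_{t'}\ge\tfrac1s$, so the relative variance of $Z$ is $O(s\,c_{f,s})$. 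Averaging $Z$ over $R=O(s\,c_{f,s}/\eps^2)$ independent $q$-draws and rescaling by $\hat S$ therefore estimates $F$ to relative error $\eps$ with constant probability (the $(1+\eps)$ slack from using $\hat S_t$ in place of $S_t$ is absorbed by rescaling $\eps$, and constant success probability follows from Chebyshev's inequality). To evaluate $Z(x)$ on a drawn $x$ the CP broadcasts $x$ and each server $t'$ returns $h_{t'}(x)$ (equivalently $f(h_{t'}(x))$), which is $O(s)$ words per sample; folding in the cost of the sampling sub-routine below, the total is the stated $O(s^3c_{f,s}/\eps^2)$ words, the extra factor of $s$ over the naive count being the price of the sampling oracle exactly as in Theorem~\ref{freq-moments-thm-2}.

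The main obstacle, and the only genuinely new ingredient over Theorem~\ref{freq-moments-thm-1}, is the local procedure by which server $t$ turns its sampler for $h_t$ into a sampler for $\mu_t\propto f(h_t(\cdot))$. The natural device is rejection sampling: draw $x\sim h_t$ and accept with probability $f(h_t(x))/(M_t\,h_t(x))$ for a constant $M_t\ge\sup_xf(h_t(x))/h_t(x)$; an accepted point is exactly $\mu_t$-distributed and each acceptance costs only local time (no communication), with $M_t/S_t$ trials in expectation. Making this rigorous needs (i) the ability of server $t$ to evaluate (or estimate) $h_t$ at a queried point, which is available in the intended applications --- in particular in the setting of Theorem~\ref{freq-moments-thm-2} where $h_t(x)$ is a finite sum over $W_t$ --- and (ii) a usable value for $M_t$, which is where monotonicity of $f$ together with $h_t(x)\le1$ (normalization/discreteness of $h_t$) enters, and where the additional factor of $s$ in the final communication bound arises; for $f$ a power one can alternatively dispense with pointwise evaluation by expressing $F$ through coincidence probabilities of independent $h_t$-samples and estimating those directly. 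I expect the write-up to proceed through the sandwich and the bounded-variance estimator essentially as above, the delicate portion being the accounting for this sampling oracle rather than any subtlety in the estimator itself.
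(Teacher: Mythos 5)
Your proposal is correct and follows essentially the route the paper intends: it is the {\sc DistributedSum} estimator of Theorem~\ref{freq-moments-thm-1} (pick a server with probability proportional to $S_t$, have it return a point $x$ with probability proportional to $f(h_t(x))$, and average the bounded ratio $Z(x)=f(\sum_t h_t(x))/\sum_t f(h_t(x))\in[1/s,c_{f,s}]$), with the same unbiasedness and $O(s\,c_{f,s})$ relative-variance computation; the paper in fact gives no separate written proof of Theorem~\ref{sampling}, treating it as the oracle version of Theorems~\ref{freq-moments-thm-1} and~\ref{freq-moments-thm-2}. The one place you go beyond the paper --- the rejection-sampling conversion of an $h_t$-sampler into an $f\circ h_t$-sampler, with its attendant assumptions on evaluating $h_t$ pointwise and bounding $M_t$ --- is a reasonable and correctly flagged reading of the (informally stated) hypotheses, and does not affect the communication bound since those trials are local.
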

As a special case we consider the well-studied case of frequency moments.
The best previous upper bound for the $k$-th frequency moment problem in the distributed setting is by \cite{WoodruffZ12} 
who gave an algorithm that achieves $s^{k-1} \left(\frac{C\log n}{\eps}\right)^{O(k)}$
communication, so the complexity still depends, albeit mildly, on $n$. 
Theorem \ref{freq-moments-thm-1} implies an algorithm with $O(s^{k+1}/\eps^2)$ words of communication. We further improve this:
\begin{theorem}\label{freq-moments-thm-4}
There are $s$ servers, with server $t$ holding a
non-negative vector $a_t=(a_{t1},a_{t2},
\ldots ,a_{tn})$.\footnote{The vector $a_t$ need not be written down explicitly in server $t$.
it just has to have the ability to (i) find $\sum_{i=1}^na_{ti}^k$ to relative error
$\varepsilon$ and draw a sample according to $\{ a_{ti}^k/\sum_{j}a_{tj}^k\}$.}
Then,  to estimate
$A=\sum_{i=1}^n\left(\sum_{t=1}^s a_{ti}\right)^k$
to within relative error $\varepsilon$,
there is an algorithm that communicates $O((s^{k-1}+s^3)(\ln s/\varepsilon )^3)$ words
\footnote{Each communicated word is either an index $i$ or a value $a_{ti}$.} in $O(1)$ rounds.
\end{theorem}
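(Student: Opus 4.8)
Write $b_i=\sum_{t=1}^s a_{ti}$, $\phi_i=\sum_{t=1}^s a_{ti}^k$ and $F_t=\sum_i a_{ti}^k$, so that $A=\sum_i b_i^k$ and $\Phi:=\sum_i\phi_i=\sum_t F_t$. The plan is a two–level $\ell_k$–importance–sampling scheme in which (i) the range in which $A$ lies is first pinned down to a factor $c_{f,s}=s^{k-1}$ using the power–mean inequality, (ii) each sampled coordinate's contribution is estimated with only $O(1)$ words rather than the naive $O(s)$, and (iii) a logarithmic bucketing plus a one–time ``heavy coordinate'' round keeps the variance — and hence the number of samples — at $\tilde O(s^{k-1}+s^3)$.

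First I would have each server send, in one round, an $(1\pm\eps)$–estimate $\tilde F_t$ of $F_t$; since all entries are non‑negative, applying (\ref{cfs}) with $f(x)=x^k$ coordinatewise gives $\phi_i\le b_i^k\le s^{k-1}\phi_i$, and summing over $i$ yields $\Phi\le A\le s^{k-1}\Phi$. Thus after $O(1)$ words the CP knows $A$ to within the factor $s^{k-1}$. Next, draw coordinates $I\sim q$, $q_i=\phi_i/\Phi$: the CP picks a server $t$ with probability $\tilde F_t/\Phi$ and asks it for one sample from $\{a_{ti}^k/F_t\}_i$ together with the anchor value $a_{tI}$. This is exactly the distribution our primitives support, and it is within a factor $s^{k-1}$ of the ideal $b_i^k/A$, so the estimator $\Phi\,b_I^k/\phi_I$ has relative second moment $\Phi\sum_i b_i^{2k}/\phi_i\le s^{k-1}A^2$. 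To avoid spending $O(s)$ words recovering $b_I=\sum_t a_{tI}$ for every sample, I would estimate the contribution cheaply: write $b_I^k=(a_{tI}+r_I)^k$ with $r_I=\sum_{t'\neq t}a_{t'I}$, sample $k$ i.i.d.\ servers $t'_1,\dots,t'_k$ (each with probability $\propto\tilde F_{t'}$), obtain $a_{t'_m,I}$ (another $k=O(1)$ words), set $\widehat r^{(m)}=a_{t'_m,I}\,\Phi/\tilde F_{t'_m}$, and form the unbiased estimate $\widehat{b_I^k}=\sum_{\ell=0}^k\binom k\ell a_{tI}^{\ell}\prod_{m=1}^{k-\ell}\widehat r^{(m)}$; then $\Phi\,\widehat{b_I^k}/\phi_I$ is unbiased for $A$ and each sample costs $O(k)=O(1)$ words, plus the below–threshold servers (those with $a_{t'I}$ well under $\eps b_I/(ks)$) that this procedure may miss change $b_I^k$ by only a $(1+\eps)$ factor.

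The variance is then tamed by \emph{bucketing}: partition $[1,s^{k-1}]$ into $L=O(k\log s)$ geometric classes $B_\ell=\{i:\rho_i\in[2^\ell,2^{\ell+1})\}$ where $\rho_i=b_i^k/\phi_i$, estimate each $A_\ell=\sum_{i\in B_\ell}b_i^k$ separately (each drawn coordinate lies in exactly one class), and output $\sum_\ell\widehat{A_\ell}$. Within a class the estimator $\Phi b_I^k/\phi_I$ varies by a bounded factor, so $O(L^2/(\eps^2\Pr[I\in B_\ell]))$ samples suffice per class; as the classes share the same sample stream, $R=\tilde O(s^{k-1}/\eps^2)$ samples overall — with an $O(\log\tfrac1\delta)$ median–of–means blow‑up, and $O(\log)$ independent repetitions of the inner estimator inside each class — give a $(1\pm\eps)$ estimate of $A$ with high probability. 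Combining the $(1\pm\eps)$ errors of the $\tilde F_t$, of the class estimates, and of the per–sample $\widehat{b_I^k}$'s and rescaling $\eps$ by a constant yields the stated guarantee; this phase contributes $\tilde O(s^{k-1}/\eps^3)$ words, matching the $s^{k-1}(\ln s/\eps)^3$ term, and everything is $O(1)$ rounds, polynomial time and linear space.

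The hard part is the variance of the cheap inner estimator $\widehat{b_I^k}$: the product form amplifies, as a $k$‑th power, the fact that the per–server probabilities $a_{t'I}^k/F_{t'}$ underweight a coordinate that is globally heavy but is only a small fraction of the holding server's mass (or is spread nearly equally over many servers). The fix — and the source of the $s^3$ term — is a separate $O(1)$–round preprocessing phase: each server flags its top $\poly(\log s/\eps)$ coordinates by $a_{ti}^k$ (so at most $s\cdot\poly(\log s/\eps)$ flagged coordinates in all), and each flagged coordinate is resolved exactly by one broadcast–and–respond round costing $s$ words, for a total of $\tilde O(s^3/\eps^{O(1)})$; equivalently, one runs the algorithm of Theorem~\ref{sampling} on the residual (truncated) function, whose $c_{f,s}$ is $O(1)$, at cost $O(s^3/\eps^2)$. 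The technical heart I expect to be the obstacle is proving that \emph{after} this preprocessing every coordinate reaching the sampling phase is ``spread–bounded'' in the precise quantitative sense that makes the relative variance of $\widehat{b_I^k}$ only $\poly(\log s/\eps)$ — i.e.\ that ``heavy for the global sum'' and ``cheap to estimate locally'' are forced to coincide — which is exactly where the power–mean inequality (\ref{cfs}) is invoked again. Putting the two phases together gives total communication $O\big((s^{k-1}+s^3)(\ln s/\eps)^3\big)$ in $O(1)$ rounds, as claimed.
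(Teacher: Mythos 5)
Your high-level architecture matches the paper's: importance-sample coordinates by $\phi_i/\Phi$ (the paper's $B_i/B$), bucket the sampled coordinates geometrically by the ratio $\rho_i=b_i^k/\phi_i$, estimate each bucket's population separately, and pay an extra $\tilde O(s^3)$ for a round of exact resolution of ambiguous coordinates. But the step you yourself flag as "the technical heart I expect to be the obstacle" is precisely where your route diverges from the paper's and where your proposal does not go through. Your cheap inner estimator $\widehat{b_I^k}=\sum_{\ell}\binom{k}{\ell}a_{tI}^{\ell}\prod_m \widehat r^{(m)}$, built from products of server-importance-sampled single entries, has relative variance that blows up exponentially in $k$ whenever the coordinate's mass sits on servers that are light in the $F_{t'}$ measure, and your proposed fix (flagging each server's top $\mathrm{poly}(\log s/\eps)$ coordinates, or running the Theorem~\ref{sampling} machinery on a truncated function) is asserted, not proved; note also that your own accounting of that phase gives $s\cdot\mathrm{poly}(\log s/\eps)$ coordinates at $s$ words each, i.e.\ $\tilde O(s^2)$, so it does not even explain the $s^3$ term. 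The paper's inner estimator is structurally different and is the real content of the theorem: for a coordinate assigned to bucket $\beta$ it queries $l=s^{k-1}/\beta$ servers chosen \emph{uniformly} and uses $\frac{s^k}{l^k}(\sum_{t\in L}a_{ti})^k$; the power-mean inequality shows $\E(Y^2)/(\E Y)^2\le s^{2-2/k}/\rho_i^{2/k}$, so for $\rho_i\approx\beta$ the averaged estimator has relative second moment $1+O(\eps/k)$, and the budget balances because $|T|\cdot l=\Theta(\beta\ln^2 s/\eps^3)\cdot s^{k-1}/\beta$ is independent of $\beta$. That "large $\rho_i$ forces the coordinate to be spread over many servers, hence uniform sampling works" trade-off is exactly the quantitative spread-boundedness you say you need but do not supply.

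Two further concrete gaps. First, your final estimator $\Phi\,b_I^k/\phi_I$ and your bucketing both require knowing (or classifying by) $\phi_I=\sum_t a_{tI}^k$, which naively costs $s$ words per sample; the paper handles this with the crude one-server estimate $\tilde B_i=a_{t(i),i}^k$, a probabilistic argument that $\tilde B_i\ge B_i/(10s\ln s)$ for most samples, and exact resolution of the $O(s^2\ln^3 s/\eps^3)$ candidates that survive the filter --- this is where the $s^3\ln^3 s/\eps^3$ term actually comes from, and you have no substitute for it. Second, you drop the paper's first phase ($m=s^{k-2}/\eps^3$ samples to either output a good estimate when $A\ge sB$ or certify $A\le sB$); the certified bound $A\le sB$ is used to show $\E(\sum_{i\in S}\rho_i)\le s|S|$, which is what caps the number of coordinates needing exact resolution. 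Your weaker range $\Phi\le A\le s^{k-1}\Phi$ does not give this, so without that phase the exact-resolution cost is uncontrolled.
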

Thus, for $k \ge 4$, the complexity is $\tilde{O}(s^{k-1}/\eps^3)$.
Our algorithm has no dependence on $n$, though it does have the restriction that $k \geq 4$. It nearly matches a known lower bound 
of $\Omega(s^{k-1}/\eps^2)$ due to \cite{WoodruffZ12}. In Theorem \ref{thm:Lip}, we extend the algorithm
and its near-optimal guarantees to a broader class of functions.
%
%

\section{Low-rank Approximation}\label{sec:low-rank}
For a matrix $A$, define $f_k(A)$ as: $f_k(A) =\min_{X: \text {rank}(X)\leq k} ||A-X||_F.$
Recall that the {\bf rank-$k$ approximation problem} is the following: Given an $n\times d$ matrix $A$, and
$\varepsilon>0$, find an $n\times d$
matrix $B$ of rank at most $k$ such that
$||A-B||_F\leq (1+\varepsilon) \cdot  f_k(A)$.

\subsection{Upper bound for low rank approximation}
One of the tools we need is a subspace embedding. 
A random $m \times n$ matrix $P$ with $m = O(d/\eps^2)$ is a {\it subspace embedding} if for all vectors
$x \in \mathbb{R}^d$, $\|PAx\|_2 = (1\pm \eps)\|Ax\|_2$. There are many choices for $P$,
including a matrix of i.i.d. $N(0, 1/m)$
random variables or a matrix of i.i.d. Rademacher random variables (uniform in $\{-1/\sqrt{m}, +1/\sqrt{m}\}$) with $m = O(d/\eps^2)$
(combining the Johnson-Lindenstrauss transform with a standard net argument) by \cite{AV99,Ach03,AV06}. 
With a slightly larger value of $m$, 
one can also use
Fast Johnson-Lindenstrauss transforms by \cite{AC09} and the many optimizations to them,
or the recent fast sparse subspace embeddings
by \cite{CW13} and its optimizations in \cite{nn12,mm13}. Such mappings can also be composed with
each other. 

We are mainly
concerned with communication, so we omit the tradeoffs of different
compositions
and just use a composition for which $m = O(d/\eps^2)$, $PA$ is an $m \times d$ matrix of words each consisting of
$O(b + \log(nd))$ bits, and $P$ can be specified using $O(d \log n)$ bits (using a $d$-wise independent hash function, as
first shown in \cite{CW09}), see Theorem \ref{thm:kv} below.
Since we will assume that $b$ is at least $\log n$, the $O(d \log n)$ bits to specify $P$ will be negligible,
though we remark that the number of bits to specify $P$ can be further reduced using results of 
\cite{kmn11}.

We will prove the following property about the top $k$ right singular vectors of $PA$ for a subspace
embedding $P$. 
\begin{theorem}\label{thm:kv}
Suppose $A$ is an $n \times d$ matrix. Let $P$ be an $m \times d$ matrix for which
$(1-\eps)\|Ax\|_2 \leq \|PAx\|_2 \leq (1+\eps)\|Ax\|_2$ for all $x \in \mathbb{R}^d$, that is,
$P$ is a subspace embedding for the column space of $A$. Suppose
$VV^T$ is a $d \times d$ matrix which projects vectors in $\mathbb{R}^d$ onto the space of
the top $k$
singular vectors of $PA$. Then
$\|A-AVV^T\|_F \leq (1+O(\eps))\cdot f_k(A).$
Furthermore, if $m =  O(d/\eps^2)$ and
$P$ is a random sign matrix with entries uniform in $\{-1/\sqrt{m}, 1/\sqrt{m}\}$,
then with $O(d)$-wise independent entries,
$P$ satisfies the above properties with probability at least\footnote{$\exp(-d)$ denotes $2^{-\Theta(d)}$.} 
$1-\exp(-d)$.
\end{theorem}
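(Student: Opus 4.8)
The plan is to split the statement into its deterministic core and its probabilistic tail. The deterministic core says: \emph{whenever} $P$ is a subspace embedding for the column space of $A$, the top-$k$ right singular space $V$ of $PA$ yields $\|A-AVV^T\|_F\le(1+O(\eps))f_k(A)$. This is the new observation and is short. The probabilistic part — that a sign matrix with $m=O(d/\eps^2)$ rows and $O(d)$-wise independent entries is such an embedding with probability $1-\exp(-d)$ — is the standard net argument and I would mostly cite it.

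\textbf{Deterministic part.} Write $A_k$ for the best rank-$k$ approximation of $A$, so $f_k(A)=\|A-A_k\|_F$ and $A_k=AV_kV_k^T$ where $V_kV_k^T$ projects onto the top $k$ right singular directions of $A$. The key point is that both $A-AVV^T=A(I-VV^T)$ and $A-A_k=A(I-V_kV_k^T)$ have every column of the form $Ax$, $x\in\mathbb{R}^d$, hence lie columnwise in the column space of $A$; therefore $P$ preserves their Frobenius norms up to $1\pm\eps$, since $\|PM\|_F^2=\sum_j\|PAx_j\|_2^2=(1\pm\eps)^2\sum_j\|Ax_j\|_2^2$. Next, because $V$ spans the top $k$ right singular directions of $PA$, Eckart--Young makes $PAVV^T$ the best rank-$k$ approximation of $PA$, and $PA_k$ is a rank-$\le k$ competitor, so $\|PA-PAVV^T\|_F\le\|PA-PA_k\|_F$. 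Chaining these:
\[
(1-\eps)\,\|A-AVV^T\|_F \;\le\; \|PA-PAVV^T\|_F \;\le\; \|PA-PA_k\|_F \;\le\; (1+\eps)\,\|A-A_k\|_F \;=\; (1+\eps)f_k(A),
\]
so $\|A-AVV^T\|_F\le\frac{1+\eps}{1-\eps}f_k(A)=(1+O(\eps))f_k(A)$.

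\textbf{Probabilistic part.} It suffices to show that with probability $1-\exp(-d)$ we have $\|Py\|_2=(1\pm O(\eps))\|y\|_2$ for all $y$ in the ($\le d$-dimensional) column space of $A$; a constant rescaling of $\eps$ then gives the embedding hypothesis used above. Fix an $n\times d$ matrix $U$ with orthonormal columns spanning this subspace and set $Q=U^T(P^TP-I)U$; the desired bound is $\|Q\|_{\mathrm{op}}\le O(\eps)$. Take a $\tfrac14$-net $\mathcal N$ of $S^{d-1}$ with $|\mathcal N|\le 9^d$; the standard covering inequality gives $\|Q\|_{\mathrm{op}}\le 2\max_{z\in\mathcal N}|z^TQz|$. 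For fixed unit $z$, $z^TQz=\|P(Uz)\|_2^2-1$ with $Uz$ a fixed unit vector, and $\|P(Uz)\|_2^2=\sum_{i=1}^m\big(\sum_j P_{ij}(Uz)_j\big)^2$ is a sum of $m$ terms of mean $1/m$. The moment method for Rademacher sums — which needs only $\Theta(m\eps^2)$-wise independence of the $P_{ij}$, as used in \cite{CW09} and its optimizations — gives $\Pr[\,|\|P(Uz)\|_2^2-1|>\eps\,]\le\exp(-cm\eps^2)$. With $m=O(d/\eps^2)$ and the hidden constant large enough this is $\exp(-c'd)$ for $c'$ as large as we like, while $\Theta(m\eps^2)=O(d)$ matches the stated independence. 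A union bound over the $9^d$ net points yields failure probability $\exp(-\Theta(d))$, and on the good event $\|Q\|_{\mathrm{op}}\le 2\eps$.

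\textbf{Where the work is.} The only genuinely technical step is the limited-independence concentration: extracting the per-net-point tail $\exp(-\Theta(d))$ from merely $O(d)$-wise independent signs, which requires bounding $\E\big[(\|P(Uz)\|_2^2-1)^{\ell}\big]$ for $\ell=\Theta(d)$ rather than invoking a full-independence Bernstein/Hanson--Wright inequality. Everything else — Eckart--Young, the ``columns lie in $\mathrm{col}(A)$'' observation, and the net reduction for the operator norm — is routine, so in the writeup I would quote the limited-independence Johnson--Lindenstrauss bound from the cited references and devote the space to the deterministic argument, which is the contribution.
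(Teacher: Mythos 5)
Your proof is correct and follows essentially the same route as the paper: the deterministic part is the same two-sided sandwich (the embedding preserves $\|A-AVV^T\|_F$ and $\|A-A_k\|_F$ because their columns lie in $\mathrm{col}(A)$, and Eckart--Young on $PA$ supplies the middle inequality, exactly matching the paper's computation with the singular-vector bases of $PA$ and $A$), and the probabilistic part is the same fixed-vector limited-independence JL bound combined with a net over the column space, which the paper handles by citing Theorem 2.2 of \cite{CW09} and Lemma 4 of \cite{ahk06}. No gaps.
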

We will combine this property with the following known property. 
\begin{theorem}\label{thm:sketch}(combining Theorem 4.2 and the second part of Lemma 4.3 of \cite{CW09})
Let $S \in \mathbb{R}^{m \times n}$ be a random sign matrix with $m = O(k \log (1/\delta)/\eps)$ in which
the entries are $O(k + \log(1/\delta))$-wise independent.
Then with probability at least $1-\delta$, if $UU^T$ is the $d \times d$ projection matrix onto the row space of $SA$,
then if $(AU)_k$ is the best rank-$k$ approximation to matrix $AU$, we have
$$\|(AU)_kU^T-A\|_F \leq (1+O(\eps))\|A-A_k\|_F.$$
\end{theorem}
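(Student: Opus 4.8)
This statement is quoted from \cite{CW09}, but here is the argument I would reconstruct. The plan has three parts: (a) reduce the claim to producing \emph{one} rank-$k$ competitor whose rows lie in $\text{rowspace}(SA)$; (b) build that competitor from the SVD of $A$; (c) control its error using a subspace-embedding property and an approximate-matrix-multiplication (AMM) property of the sign matrix $S$, arranged so that a Pythagorean identity squares the error down to the correct order. Throughout write $A_k = U_k\Sigma_kV_k^T$ for the top-$k$ SVD of $A$, $A_{-k} = A-A_k$, and recall $UU^T$ projects onto $\text{rowspace}(SA)$, so $\text{colspace}(U)=\text{rowspace}(SA)$.

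First I would observe that among all rank-$k$ matrices $Y$ with rows in $\text{rowspace}(SA)$, the matrix $(AU)_kU^T$ is optimal. Indeed any such $Y$ can be written $Y=WU^T$ with $\text{rank}(W)\le k$, and since the rows of $A-AUU^T$ are orthogonal to $\text{colspace}(U)$ while those of $AUU^T-Y$ lie in $\text{colspace}(U)$, a Pythagorean split gives $\|A-Y\|_F^2=\|A-AUU^T\|_F^2+\|AU-W\|_F^2\ge\|A-AUU^T\|_F^2+\|AU-(AU)_k\|_F^2=\|A-(AU)_kU^T\|_F^2$. Hence it suffices to exhibit \emph{some} rank-$k$ matrix $Y$ with rows in $\text{rowspace}(SA)$ and $\|A-Y\|_F\le(1+O(\eps))\|A-A_k\|_F$.

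The candidate is $Y:=U_k(SU_k)^+(SA)$, which is well defined once $SU_k$ has full column rank, has rank $\le k$ (left factor $U_k$), and has rows in $\text{rowspace}(SA)$ (right factor $SA$). Expanding $A=A_k+A_{-k}$ and using $(SU_k)^+SU_k=I_k$ together with $A_k=U_k\Sigma_kV_k^T$ yields $Y=A_k+U_k(SU_k)^+SA_{-k}$, hence $A-Y=A_{-k}-U_k(SU_k)^+SA_{-k}$. These two summands have orthogonal column spaces — $\text{colspace}(A_{-k})\subseteq\text{span}\{u_i:i>k\}$, whereas the second term lies in $\text{colspace}(U_k)$ — so $\|A-Y\|_F^2=\|A_{-k}\|_F^2+\|U_k(SU_k)^+SA_{-k}\|_F^2$. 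Writing $U_k(SU_k)^+SA_{-k}=U_k\big((SU_k)^TSU_k\big)^{-1}U_k^TS^TSA_{-k}$, its Frobenius norm is at most $\|((SU_k)^TSU_k)^{-1}\|_2\cdot\|U_k^TS^TSA_{-k}\|_F$. For an $m\times n$ sign matrix with $m=O(k\log(1/\delta)/\eps)$ and $O(k+\log(1/\delta))$-wise independent entries, two standard facts hold simultaneously with probability $\ge 1-\delta$: $S$ is a constant-distortion subspace embedding on the $k$-dimensional space $\text{colspace}(U_k)$, so $\|((SU_k)^TSU_k)^{-1}\|_2=O(1)$; and the AMM bound $\|U_k^TS^TSA_{-k}-U_k^TA_{-k}\|_F\le\sqrt{c\eps/k}\,\|U_k\|_F\|A_{-k}\|_F$ holds. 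Since $U_k^TA_{-k}=0$ and $\|U_k\|_F=\sqrt k$, this gives $\|U_k^TS^TSA_{-k}\|_F\le\sqrt{c\eps}\,\|A_{-k}\|_F$, hence $\|U_k(SU_k)^+SA_{-k}\|_F=O(\sqrt\eps)\|A_{-k}\|_F$ and $\|A-Y\|_F^2\le(1+O(\eps))\|A-A_k\|_F^2$. Combined with the reduction, this proves the theorem.

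The step I expect to be the crux is the parameter bookkeeping in the last paragraph: the $\sqrt\eps$ loss from AMM is acceptable only because it enters $\|A-Y\|_F^2$ \emph{additively and squared} via the column-space orthogonality; a naive triangle inequality $\|A-Y\|_F\le\|A-A_k\|_F+\|A_k-Y\|_F$ would give only $(1+O(\sqrt\eps))$ and force $m=O(k\log(1/\delta)/\eps^2)$. The only other nontrivial point is verifying that a sign matrix with the stated limited independence meets both the subspace-embedding and the approximate-multiplication guarantees with probability $1-\delta$ — precisely the content of Theorem 4.2 and Lemma 4.3 of \cite{CW09} being cited.
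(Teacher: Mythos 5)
The paper does not prove this statement---it is imported verbatim as a citation of Theorem 4.2 and Lemma 4.3 of \cite{CW09}---and your reconstruction is correct and is exactly the argument in that source: the optimality of $(AU)_kU^T$ among rank-$k$ matrices with rows in the sketch rowspace (the Pythagorean split), plus the explicit competitor $U_k(SU_k)^+SA$ whose error is controlled by a constant-distortion subspace embedding of $\mathrm{colspace}(U_k)$ together with approximate matrix multiplication at accuracy $\sqrt{\eps/k}$. Your closing observation about why the $\sqrt\eps$ AMM loss enters only squared and additively, which is what permits $m=O(k\log(1/\delta)/\eps)$ rather than $O(k\log(1/\delta)/\eps^2)$, is precisely the point of the parameter setting.
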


We can now state the algorithm, which we call {\sc AdaptiveCompress}. 
%
\begin{figure*}[h]
\begin{center}
\fbox{
\parbox{\textwidth}{
\medskip
{\sc AdaptiveCompress($k$,$\eps$, $\delta$)}
\begin{enumerate}
\item Server $1$ chooses a random seed for an $m \times n$ sketching matrix $S$ as in Theorem \ref{thm:sketch},
given parameters $k, \eps,$ and $\delta$, where $\delta$ is a small positive constant.
It communicates the seed to the other servers.
\item Server $i$ uses the random seed to compute $S$, and then $SA^i$, and sends it to Server $1$.
\item Server $1$ computes $\sum_{i=1}^s SA^i = SA$. It computes an $m \times d$
orthonormal basis $U^T$ for the row space of $SA$, and sends $U$ to all the servers.
\item Each server $i$ computes $A^i U$. 
\item Server 1 chooses another random seed for a $O(k/\varepsilon^2)\times n$ matrix $P$ 
which is to be $O(k)$-wise independent and communicates this seed to all servers.
\item The servers then agree on a subspace embedding matrix $P$ of Theorem \ref{thm:kv} for $AU$, 
where $P$ is an $O(k/\eps^3) \times n$ matrix which can be described with $O(k \log n)$ bits.
\item Server $t$ computes $PA_tU$ and send it to Server $1$.
\item Server $1$ computes $\sum_{t=1}^s PA_tU = PAU$. It computes $VV^T$, which is an $O(k/\eps) \times O(k/\eps)$ projection matrix onto the top $k$ singular vectors of $PAU$, and sends $V$ to all the servers.
\item Server $t$ outputs $C_t = A_t UVV^TU^T$. Let $C=\sum_{t=1}^s C_t$. $C$ is not computed explicitly.
\end{enumerate}
}
}
\end{center}
\end{figure*}
In {\sc AdaptiveCompress}, the matrix $P$ is of size $O(k/\eps^3) \times n$.

\begin{proof} (of Theorem \ref{thm:kv}.)
Suppose $P$ is a subspace embedding for the column space of $A$.
Form an orthonormal basis of ${\bf R}^d$ using the right singular vectors of $PA$. Let
$v_1,v_2,\ldots ,v_d$ be the basis.
\begin{align*}
||A-A\sum_{i=1}^kv_iv_i^T||_F^2&=\sum_{i=k+1}^d |Av_i|^2\leq (1+\varepsilon)^2\sum_{i=k+1}^d|PAv_i|^2 = (1+\eps)^2f_k^2(PA).
\end{align*}
Also, suppose now $u_1,u_2,\ldots ,u_d$ is an orthonormal basis consisting of the singular vectors
of $A$. Then, we have
\begin{align*}
f_k(PA)^2&\leq ||PA-PA\sum_{i=1}^ku_iu_i^T||_F^2 = \sum_{i=k+1}^d |PAu_i|^2 
\leq (1+\varepsilon )^2\sum_{i=k+1}^d|Au_i|^2 = (1+\eps)^2f_k(A)^2.
\end{align*}
Thus,$
\|A - A \sum_{i=1}^k v_i v_i^T\|_F^2 \le (1+\eps)^4 f_k(A)^2,$
as desired.

For the second part of the theorem, regarding the choice of $P$, fix
attention on one particular $x\in {\bf R}^d$.
We apply Theorem 2.2 of \cite{CW09}
with $A,B$ of that theorem both set to $Ax$ of the current theorem
and $m = O(d/\eps^2)$ in the notation of that theorem. This states that for $m = O(d/\eps^2)$,
if $P$ is an $m \times n$
matrix with $O(d)$-wise independent entries uniform in $\{-1/\sqrt{m}, +1/\sqrt{m}\}$, then for any fixed
vector $x$, $\|PAx\|_2 = (1 \pm \eps)\|Ax\|_2$ with probability $1-\exp(-d)$. We combine this with
Lemma 4 in Appendix A of \cite{ahk06}, based on \cite{fo05},
to conclude that for all vectors $x$, $\|PAx\|_2 = (1 \pm \eps)\|Ax\|_2$ with
probability $1-\exp(-d)$ (for a different constant in the $\exp()$ function).
\end{proof}

\begin{proofof}{of Theorem \ref{thm:low-rank-arbitrary}}
By definition of the {\sc AdaptiveCompress} protocol, we have $\|A-C\| = \|A- AU V V^T U^T\|,$ where
all norms in this proof are the Frobenius norm.

Notice
that $UU^T$ and $I_d-UU^T$ are projections onto orthogonal subspaces, where $I_d$ is the $d \times d$ identity
matrix. It follows by the Pythagorean theorem applied to each row that
\begin{eqnarray}\label{eqn:first}
\|AU V V^T U^T - A\|^2
& = & \|(AU V V^T U^T -A)(U U^T)\|^2 + \|(AU V V^T U^T-A)(I - UU^T)\|^2\nonumber \\
& = & \|AUVV^TU^T - AUU^T\|^2 + \|A-AUU^T\|^2,
\end{eqnarray}
where the second equality uses that $U^TU = I_c$, where $c$ is the number of columns of $U$.

Observe that the row spaces of $AUVV^TU^T$ and $AUU^T$ are both in the row space of $U^T$, and
therefore in the column space of $U$. It follows that since $U$ has orthonormal columns,
$\|AUVV^TU^T-AUU^T\| = \|(AUVV^TU^T-AUU^T)U\|,$ 
and therefore
\begin{eqnarray}\label{eqn:second}
\|AUVV^TU^T-AUU^T\|^2 + \|A-AUU^T\|^2 & = & \|(AUVV^TU^T-AUU^T)U\|^2 + \|A-AUU^T\|^2 \nonumber \\
& = & \|AUVV^T - AU\|^2 + \|A-AUU^T\|^2,
\end{eqnarray}
where the second equality uses that $U^T U = I_c$. Let $(AU)_k$ be the best rank-$k$ approximation to
the matrix $AU$. By Theorem \ref{thm:kv}, with probability $1-o(1)$,
$\|AUVV^T - AU\|^2 \leq (1+O(\eps)) \|(AU)_k - AU\|^2,$
and so
\begin{eqnarray}\label{eqn:third}
\|AUVV^T-AU\|^2 + \|A-AUU^T\|^2
& \leq & (1+O(\eps)) \|(AU)_k - AU\|_2^2 +  \|A-AUU^T\|^2 \nonumber \\
& \leq & (1+O(\eps)) ( \|(AU)_k - AU\|_2^2 +  \|A-AUU^T\|^2).
\end{eqnarray}
Notice that the row space of $(AU)_k$ is spanned by the top $k$ right singular
vectors of $AU$, which are in the row space of $U$. Let us write $(AU)_k = B \cdot U$, where $B$ is a rank-$k$ matrix.

For any vector $v \in \mathbb{R}^d$
, $vUU^T$ is in the rowspace of $U^T$, and since the columns of $U$ are orthonormal, $\|vUU^T\|^2 = \|vUU^TU\|^2 = \|vU\|^2$,
and so 
\begin{eqnarray}\label{eqn:fourth}
\|(AU)_k - AU\|^2 + \|A-AUU^T\|^2& = & \|(B -A)U\|^2 + \|A(I-UU^T)\|^2 \nonumber \\
& = & \|BUU^T - AUU^T\|^2 + \ \|AUU^T - A\|^2.
\end{eqnarray}
We apply the Pythagorean theorem to each row in the expression in (\ref{eqn:fourth}), noting that
the vectors $(B_i-A_i)UU^T$ and $A_iUU^T-A_i$ are orthogonal, where $B_i$ and $A_i$ are the $i$-th rows of $B$ and $A$,
respectively. Hence,
\begin{eqnarray}\label{eqn:fifth}
\|BUU^T - AUU^T\|^2 + \|AUU^T - A\|^2
& = & \|BUU^T - A\|^2 = \|(AU)_k U^T - A\|^2,
\end{eqnarray}
where the first equality uses that
$$\|BUU^T-A\|^2 = \|(BUU^T-A)UU^T\|^2 + \|(BUU^T-A)(I-UU^T)\|^2 = \|BUU^T - AUU^T\|^2 + \|AUU^T - A\|^2,$$ and 
the last equality uses the definition of $B$. By Theorem \ref{thm:sketch},
with constant probability arbitrarily close to $1$, we have
\begin{eqnarray}\label{eqn:sixth}
\|(AU)_k U^T - A\|^2 & \leq & (1+O(\eps)) \|A_k - A\|^2.
\end{eqnarray}
It follows by combining (\ref{eqn:first}), (\ref{eqn:second}), (\ref{eqn:third}), (\ref{eqn:fourth}), (\ref{eqn:fifth}),
(\ref{eqn:sixth}),
that $\|AU V V^T U^T - A\|^2 \leq (1+O(\eps))\|A_k-A\|^2$, which shows the correctness property of {\sc AdaptiveCompress}.

We now bound the communication. In the first step, by Theorem \ref{thm:sketch}, $m$ can be set to $O(k/\eps)$
and the matrix $S$ can be described using a random seed that is $O(k)$-wise independent. The communication of
steps 1-3 is thus $O(sdk / \eps)$ words. By Theorem \ref{thm:kv}, the remaining steps take $O(s(k/\eps)^2/\eps^2) =
O(sk^2/\eps^4)$ words of communication.

To obtain communication with $O(b + \log(nd))$-bit words if the entries of the matrices $A_t$ are specified by $b$ bits, Server 1 can
instead send $SA$ to each of the servers. The $t$-th server then computes $PA_t (SA)^T$ and sends this to Server 1.
Let $SA = RU^T$, where $U^T$ is an orthonormal
basis for the row space of $SA$, and $R$ is an $O(k/\eps) \times O(k/\eps)$ change of basis matrix.
Server 1 computes $\sum_t PA_t (SA)^T = PA(SA)^T$ and sends this to each of the servers. Then, since each of the servers
knows $R$, it can compute $PA(SA)^T (R^T)^{-1} = PAU$. It can then compute the SVD of this matrix, from which it obtains
$VV^T$, the projection onto its top $k$ right singular vectors. Then, since Server $t$ knows $A_t$ and $U$, it can compute
$A_t U(VV^T)U^T$, as desired. Notice that in this variant of the algorithm what is sent is $SA_t$ and $PA_t(SA)^T$, which
each can be specified with $O(b + \log(nd))$-bit words if the entries of the $A_t$ are specified by $b$ bits.
\end{proofof}

\subsection{Lower bound for low-rank approximation}

Our reduction is from the multiplayer SUM problem.
\begin{theorem}\label{thm:multi-lb}(\cite{PhilipsVZ12})
 Suppose each of $s$ players has a binary vector $a^i$ with $n$ bits and the first player wants to compute $\sum_{i=1}^s a^i$ mod $2$
with constant probability.  
Then the total communication needed is $\Omega(sn)$ bits.
\end{theorem}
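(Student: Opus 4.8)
The plan is to use the \emph{symmetrization} method of \cite{PhilipsVZ12}, reducing the $s$-player problem to a two-party problem whose hardness follows from a short information-theoretic argument. Fix the hard distribution $\mu$ under which $a^1,\dots,a^s$ are independent and uniform in $\{0,1\}^n$; by Yao's principle it suffices to lower bound the expected communication of an arbitrary deterministic protocol $\Pi$ that, run on $\mu$, outputs $Z:=\bigoplus_{i=1}^s a^i$ at player $1$ correctly with probability at least the given constant $\delta$. Let $C=\E_\mu[\text{total communication of }\Pi]$, and for a given input let $C_i$ be the number of bits exchanged between player $i$ and player $1$ (the CP). Since in this model every message involves player $1$, we have $\sum_{i=2}^s C_i\le C$, so for a uniformly random $J\in\{2,\dots,s\}$ we get $\E[C_J]\le C/(s-1)$.

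\emph{Reduction to two parties.} I would build a two-party protocol $\Pi'$ for the task ``Alice holds a uniform $x\in\{0,1\}^n$ and Bob must output $x$''. Alice and Bob use shared randomness to pick $J\in\{2,\dots,s\}$; Alice sets $a^J:=x$, and Bob samples $a^i$ independently and uniformly for every $i\neq J$ and thereafter simulates player $1$ together with all players $i\neq J$. They run $\Pi$: each message between player $J$ and player $1$ is sent over the real channel, and every other message is internal to Bob. Because $\mu$ is a product distribution symmetric in $a^2,\dots,a^s$, the inputs fed to $\Pi$ are distributed exactly as $\mu$ conditioned on the value of $J$, so $\Pi$ outputs $Z$ at player $1$ (inside Bob) with probability at least $\delta$; since Bob knows $a^i$ for all $i\neq J$, he recovers $x=Z\oplus\bigoplus_{i\neq J}a^i$ with probability at least $\delta$. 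The number of bits sent by $\Pi'$ is exactly $C_J$, so its expected communication is at most $C/(s-1)$.

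\emph{Two-party lower bound.} It remains to show that any protocol after which Bob outputs Alice's uniform $x\in\{0,1\}^n$ with probability at least $\delta$ uses expected communication $\Omega(n)$; combined with the previous step this gives $C/(s-1)=\Omega(n)$, i.e.\ $C=\Omega(sn)$. Let $T$ be the transcript and $B$ the collection of Bob's inputs and coins, which are independent of $x$. Bob's output is a function $\hat x(T,B)$ with $\Pr[\hat x=x]\ge\delta$, so Fano's inequality yields $H(x\mid T,B)\le 1+(1-\delta)\,n$, whence $I(x;T\mid B)=H(x\mid B)-H(x\mid T,B)\ge n-1-(1-\delta)n=\delta n-1$. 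Therefore $H(T)\ge H(T\mid B)\ge I(x;T\mid B)\ge \delta n-1$, and since transcripts are prefix-free (each message tagged with its length and endpoints) their expected bit-length is at least $H(T)=\Omega(n)$. This proves the two-party bound, hence the theorem. For the finitely many values of $n$ too small for $\delta n-1$ to be positive the bound $\Omega(sn)$ is absorbed into the hidden constant, and the same reduction with $n=1$ gives the $\Omega(s)$ bound needed there.

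The step I expect to be the crux is the faithfulness of the simulation in the reduction: one must verify carefully that Alice and Bob together realize exactly the distribution $\mu$ (using that it is a symmetric product distribution over the simulated players), that the Alice/Bob partition of the message schedule is consistent with $\Pi$ — which is precisely where it matters that every message in this model is incident to player $1$, so exactly the messages incident to player $J$ cross the real channel — and that the accounting $\E_J[C_J]\le C/(s-1)$ remains valid after taking expectations over both $J$ and $\mu$. A secondary, routine point is the passage from the Shannon entropy of the transcript to expected communication, which is standard once transcripts are encoded prefix-freely; none of this needs more than careful bookkeeping.
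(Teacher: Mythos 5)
The paper does not prove this statement at all: it is quoted as a known result of \cite{PhilipsVZ12} and used as a black box in the reduction for Theorem \ref{thm:rank-k-lower-bound}. So your proposal cannot be compared to an internal proof; what you have done is reconstruct, essentially correctly, the symmetrization argument that is the main technique of the cited work itself: pick a uniformly random player $J\in\{2,\dots,s\}$, let Alice play $J$ and Bob play the coordinator together with all other players, observe that under the i.i.d.\ uniform distribution the simulation is faithful and that Bob can decode $a^J$ from the mod-2 sum, and then invoke a two-party Fano/entropy bound of $\Omega(n)$ on the Alice--Bob channel, whose load is only an $\Theta(1/s)$ fraction of the total communication. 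The chain $n-H(x\mid T,B)\ge \delta n-1$ and $\mathsf{E}|T|\ge H(T\mid B)\ge I(x;T\mid B)$ is sound (conditioning on $B$, Bob's inputs and coins, is the cleaner way to get prefix-freeness, rather than tagging messages with lengths, which would alter the protocol being lower-bounded). Three small caveats: (i) your accounting assumes every message is incident to player 1, i.e.\ the coordinator model of this paper; in the general point-to-point model the same argument works with $\sum_j C_j\le 2C$, costing a factor 2; (ii) the ``constant probability'' in the statement must be read as a constant bounded away from what random guessing achieves --- for $n=1$ and success probability $\le 1/2$ no communication is needed, so your small-$n$ remark needs the sharper Fano bound $I\ge n-H_2(1-\delta)-(1-\delta)n$ with $\delta>1/2$ rather than the crude $\delta n-1$; (iii) success probability of the deterministic protocol obtained from Yao is an average over $\mu$, and you correctly use that it is unchanged when conditioning on $J$, since $\mu$ is a product distribution independent of $J$. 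With these readings your argument is a valid self-contained proof of the cited theorem.
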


\begin{proof}(of Theorem \ref{thm:rank-k-lower-bound}.)
We reduce from the $s-2$ player SUM problem, in which each player has a $k \times d$ binary matrix $A^i$ and the first player wants to learn their sum. By Theorem \ref{thm:multi-lb}, this problem needs $\Omega(skd)$ communication, since even the mod $2$ version of the problem requires this amount of communication. Now consider the $s$-player problem $s$-RESTRICT-SUM in which the first player has $I_{d}$, the second player has $-I_{d}$, the remaining $s-2$ players have a $k \times d$ binary matrix $A^i$ and the first player wants to learn the sum of all inputs. This also requires $\Omega(skd)$ communication. This follows since if this problem could be solved with $o(skd)$ communication, then SUM with $s-2$ players would have $o(skd)$ communication by a simulation in which the first player of the $(s-2)$-SUM problem simulates the first three players of the $s$-RESTRICT-SUM problem.

In our $s$-player low-rank approximation problem, we give the first player $I_{d}$, the second player $-I_{d}$, and the remaining $s-2$ players each has a random $k \times d$ binary matrix $A^i$. Note that there is a unique rank-$k$ approximation to the sum of the $k$ player inputs, namely, it is the matrix $\sum_{i=3}^s A^i$. It follows that any algorithm which outputs a projection matrix $VV^T$ for which $\|A-AVV^T\|_F^2 \leq (1+\eps) \min_{X: \textrm{ rank}(X) \leq k}$, for any $\eps \geq 0$, must be such that $VV^T$ is a projection onto the row space of $\sum_{i=3}^s A^i$. This follows because $(1+\eps) \min_{X: \textrm{ rank}(X) \leq k} \|A-X\|_F = (1+\eps) \cdot 0 = 0$. 

Now, since the first player has $I_{d}$, his output is $I_{d} VV^T$, where the row space of $V^T$ equals the row space of $\sum_{i=3}^s A^i$. Suppose the total communication of our problem is $C$.

We use this to build a protocol for $s$-RESTRICT-SUM, which has the same inputs as in our $s$-player low rank approximation problem. Notice that $A =\sum_{i=3}^s A^i$ is a $k \times d$ matrix with rows in $\{0, 1, 2, ..., s-2\}^d$.

\noindent
{\bf Claim.} The span of the rows of $A$ can intersect $\{0,1,\ldots, s-2\}^d$ in at most $(2s)^k$ distinct points.

\begin{proof}
Let ${\mbox rowspace}(A)$ denote the row space of $A$.
We will bound the size of ${\mbox rowspace}(A) \cap GF(p)^d$ for prime $p$ with $s-2 < p < 2(s-2)$, where $GF(p)$ is the finite field with elements $\{0, 1, 2, \ldots, p-1\}$, and $GF(p)^d$ is the vector space over $GF(p)$. This will be an upper bound
on the size of ${\mbox rowspace}(A) \cap \{0,1,\ldots, s-2\}^d$.
Since ${\mbox rowspace}(A)$ is $k$-dimensional, so is ${\mbox rowspace}(A) \cap GF(p)^d$. Hence the intersection has at most $k$ linearly independent points. These $k$ linearly independent points can be used to generate the remaining points in ${\mbox rowspace}(A) \cap GF(p)^d$.
The number of distinct combinations of these points is at most $ p^k < (2s)^k$, bounding the intersection size.
\end{proof}

Next, players $3,\ldots, s$ agree on random $\{+1,-1\}^d$ vectors $u^1, \ldots u^{k'}$ where $k' =k\log 4s$ via a public coin. The entries of $u^1, ..., u^{k'}$ need only be $O(k \log s)$-wise independent, and as such can be agreed upon by all the players using only $O(sk\log s)$ bits of communication. Each player $i$ then computes the inner products $A^i_j \cdot u^1$, \ldots, $A^i_j \cdot u^{k'}$ for each $j \in \{1,\ldots, k\}$. Here $A^i_ j$ denotes the $j$'th row of a the $i$'th player's matrix $A^i$.

The players $P_3, ..., P_s$ send all of these inner products to $P_1$. The latter, for all rows $j \in \{1, ..., k\}$, computes the inner products $A_j \cdot u^1 , \ldots, A_j\cdot u^{k \log s}$, where $A = \sum_{i=3}^s A^i$. This can be done using $\tilde{O}(sk^2)$ communication. Since $P_1$ now has $V^T$, he can compute the $O(s)^k$ points in $\{0, 1, ..., s-2\}^d$ that each row of $A$ could possibly be. Let $p$ be one such possible point. For each row $j$, $P_1$ checks if $A_j\cdot u^l$ = $p \cdot u^l$
for every $l \in \{1, 2, \ldots, k'\}$. He decides $p=A_j$ iff all $k\log s$ equalities hold for $A_j$. In this way, he can reconstruct $A =\sum_{i=3}^s A_i$. The number $k \log s$ of the different $u$ vectors is chosen so that by a union bound, the procedure succeeds with high probability. 

We can thus solve the $s$-RESTRICT-SUM problem using our $s$-player low rank problem with communication $C + \tilde{O}(sk^2)$, where $C$ was the communication of our low-rank problem. Therefore,
$C + \tilde{O}(sk^2) = \Omega(skd)$, which implies $C = \tilde{\Omega}(skd)$ since $k < d$.
\end{proof}

\section{Frequency Moments and Higher Order Correlations}
In this section, we prove Theorems (\ref{freq-moments-thm-1}), (\ref{freq-moments-thm-2}) and (\ref{freq-moments-thm-4}).

We begin with some common notation. 
For $t\in [s], i\in [n]$:
\[
C_t=\sum_{i=1}^n f(a_{ti})\; ;\;  B_i=\sum_{t=1}^s f(a_{ti}) \; ;\; 
A_i = f\left( \sum_{t=1}^s a_{ti} \right).
\]
Let $B=\sum_iB_i=\sum_tC_t\; ;\;  A= \sum_i A_i$.
The task is to estimate $A$. We analyze the following algorithm.
Let $l=100\frac{s \cdot c_{f,s}}{\varepsilon ^2}$. The parameters in the algorithm will be specified presently.

\begin{figure*}[h]
\begin{center}
\fbox{
\parbox{\textwidth}{
\medskip
{\sc DistributedSum($\eps$)}
\begin{enumerate}
\item For $t\in [s]$, server $t$ computes $C_t$ and all servers send their $C_t$ to CP. This is round 1.
\item CP does $l$ i.i.d. trials, in each picking a $t$, with probabilities $\{ C_t/B\} $. Let $d_t$ be the
number of times it picks $t$. CP sends $d_t$ to server $t$.
\item Server $t$ picks $d_t$ samples $i_1,i_2,\ldots $ in i.i.d. trials, each according to probabilities
$\{ f(a_{ti})/C_t\}$ and sends the $d_t$ indices to CP. Round 2 is complete when CP receives all these indices.
\item CP collects all the samples. Let $S$ be the set of sampled $i$ (so, $|S|=l$). CP sends all of $S$ to all
servers.
\item Server $t$ sends $a_{ti}$ for all $i\in S$ to CP.
\item CP computes $A_i,B_i$ for all $i\in S$ and outputs $\frac{B}{l} \sum_{i\in S} \frac{A_i}{B_i}$ as its
estimate of $A$.
\end{enumerate}
}
}
\end{center}
\end{figure*}

\begin{proof} (of Theorem (\ref{freq-moments-thm-1})):

To analyze the algorithm, we think of it differently: suppose CP picks $t$ for the first of its $l$ trials and
asks that $t$ to pick $i$ according to its $\{ f(a_{ti}/C_t\}$. Let $X$ be the random variable $BA_i/B_i$ for that $i$.
Clearly the estimate made by the algorithm can be viewed as the average of $l$ i.i.d. copies of $X$. So it will
suffice to show that (i) $X$ is unbiased : I.e., $E(X)=A$ and (ii) Var$(X)\leq c_{f,s}sA^2$ (whence,
the variance of the average of $l$ i.i.d. copies of $X$ would have variance at most $\varepsilon^2A^2$ giving us the
relative error bound.)

The first part is
easy: Let $p_i$ be the probability that we pick $i$ by this process. Clearly,
$p_i=\sum_{t=1}^s\prob (\text{ CP picked $t$ })\prob ( t\text{ picks }i)=\sum_t \frac{C_t}{B}\frac{f(a_{ti})}{C_t}=\frac{B_i}{B}.$
So, $E(X)=\sum_{i=1}^n B\frac{A_i}{B_i}\frac{B_i}{B}=A,$ proving (i).
For (ii), we have
$E(X^2) = B^2\sum_i p_i \frac{A_i^2}{B_i^2}=B\sum_i \frac{A_i^2}{B_i}\leq ABc_{f,s}\leq c_{f,s}sA^2,$
since, $A_i=f(\sum_t a_{ti})\leq c_{f,s} \sum_{t=1}^s f(a_{ti})$ by the definition of $c_{f,s}$ and
by monotonicity of $f$, we have $B_i=\sum_t f(a_{ti})\leq sf(\sum_t a_{ti})$.

To prove the claimed resource bounds, note that polynomial time and linear space bounds are
obvious, since, all that each server has to do is to compute all $f(a_{ti})$, sum them up
and sample at most $l$ times. The communication is dominated by each of $s$ servers sending
$\{ a_{ti},i\in S\}$ to CP which is $sc_{f,s}/\varepsilon^2$ words per server giving us a total
of $O(s^2c_{f,s}/\varepsilon^2)$.

Now for the lower bound, we use (rather unsurprisingly) the set-disjointness problem. 
It is known (~\cite{AMS99,Bar-Yossef04,Chakrabarti03,g09,j09,WoodruffZ12})
that the following problem needs $\Omega(n)$ bits of communication even for a randomized algorithm:
we distinguish between two situations: (a) Each of $s$ servers holds a subset of $[n]$ and the subsets are
pairwise disjoint and (b) There is exactly one element common to all $s$ sets.
We reduce this problem to ours. Let $S_t$ be the subset held by server $t$.
By definition of $c_{f,s}$, there exist $x_1,x_2,\ldots ,x_s\in {\bf R}_+$ such that
$f(x_1+x_2+\cdots +x_s)=c_{f,s}(f(x_1)+f(x_2)+\cdots +f(x_s)).$
Let $n=\frac{c_{f,s}-1}{\varepsilon}.$
Let $a_{ti}$ be defined by: $a_{ti}= x_t \text{ if } i\in S_t$ and $a_{ti} = 0$ otherwise.
If the sets are disjoint, then
$\sum_{i=1}^n f\left( \sum_{t=1}^s a_{ti}\right) = \sum_{t=1}^s |S_t|f(x_t).$
In the case (b) when the sets all share one element in common,
$\sum_{i=1}^n f\left( \sum_{t=1}^sa_{ti}\right) = \sum_{t=1}^s (|S_t|-1)f(x_t) + f(x_1+x_2+\cdots +x_s)
= \sum_{t=1}^s |S_t|f(x_t)+ (c_{f,s}-1) \sum_t f(x_t)=\sum_{t=1}^s |S_t|f(x_t)+ \varepsilon n\sum_t f(x_t).$
Since $|S_t|\leq n$, it follows that if we can estimate $\sum_i f(\sum_t a_{ti})$ to relative error $\varepsilon$, then we can distinguish the
two cases. But it is known that this requires $\Omega (n)$ bits of communication which is $\Omega (c_{f,s}/\varepsilon)$
proving the lower bound.   

\end{proof}

\begin{proof} (of Theorem (\ref{freq-moments-thm-2}):
%
The only change is in the sampling algorithm:
\begin{itemize}
\item Order the $j=(j_1,j_2,\ldots ,j_k)$ lexicographically. Start with the first $j$ as the sample
and compute $a_{tj}$ by making a pass through the entire data:
For each $i\in W_t$, after $v_{i,j_1}, v_{i,j_2}, \ldots ,v_{i,j_k}$ are read, compute
                $g(v_{i,j_1},v_{i,j_2},\ldots ,v_{i,j_k})$ and sum over all $i\in W_t$.
\item Process the next $j$ similarly. After processing a $j$, say, $j=j_0$, compute $f(a_{tj_0})$ and keep a running total of
    $f(a_{tj})$ for all $j$ seen so far. Reject the old sample and replace it by the current $j_0$ with probability
            $\frac{f(a_{tj_0})}{\text{ Total of all } f(a_{tj}) \text{ including }j_0}.$
\item If the old sampled $j$ is not rejected, just keep it as the sample and go to next $j$.
\end{itemize}
The proof of correctness and linear space bound follow straightforwardly by plugging in this sampling algorithm into
Theorem (\ref{freq-moments-thm-1}). 
\end{proof}

We next turn to a more refined algorithm for estimating frequency moments with near-optimal communication, using the specific function $f(x)=x^k$. Here is the algorithm.

\begin{figure*}[h]
\begin{center}
\fbox{
\parbox{\textwidth}{
\medskip
{\sc FrequencyMoments($k$, $\eps$)}
\begin{enumerate}
    \item Pick an i.i.d. sample $S_0$ of $m=s^{k-2}/\eps^3$ indices $i$, where each $i$ is picked
        according to $\{ B_i/B\}$.
    \item Find average $\rho_i$ among the picked $i$ by getting all the $a_{ti}$
        and estimate $A$ by $\tilde A=B$ times this average.
    \item If $\tilde A\geq sB$, then declare it to be the final estimate of $A$ and stop.
    \item CP now gets an i.i.d. sample $S$ of $O(s^{k-1}(\ln s)^2/\eps^3)$ $i$ 's, each according to $\{ B_i/B\}$.
    \item For each $\beta \in \{ s^{k-1},e^{-\varepsilon }s^{k-1},e^{-2\varepsilon }s^{k-1},\ldots ,1\}$,
        CP does the following:
            \begin{enumerate}
                \item Pick a subset $T$ of $S$ of cardinality $\Omega(\beta (\ln s)^2/\eps^3)$ u.a.r.
                \item For each $i\in T$, pick a set $L$ of $l=\frac{s^{k-1}}{\beta}$ $t\in [s]$ u.a.r. Find all the $a_{ti},t\in L$ and find
                    $\frac{s^k}{l^k}\left( \sum_{t\in L}a_{ti}\right)^k$. Repeat this $\Omega(k\ln s+\ln(1/\eps))$ times and take the median of all values found to be the estimate $\tilde A_i$ of $A_i$.
                \item For each $i\in T$, take  $\tilde B_i=a_{t(i),i}$, where, $t(i)$ is defined in (\ref{tofi}).
                \item For every $i\in T$ with $\tilde A_i/\tilde B_i\in [\beta e^{-\varepsilon },\beta)$, do an exact computation
                of $A_i,B_i$ by asking every server for all the $a_{ti}$ values.
                \item From the above estimate $|S_\beta \cap T$ and compute 
$\tilde{s}_\beta = |S_\beta \cap T| |T|/|S|$ as the estimate of $|S_\beta|$.
            \end{enumerate}
    \item Return $B\sum_\beta \tilde s_\beta \beta$ as the estimate of $A$.
\end{enumerate}
}
}
\end{center}
\end{figure*}

\begin{proof} (of Theorem (\ref{freq-moments-thm-4}):
Let $B_i=\sum_{t=1}^sa_{ti}^k$ and $A_i=\left( \sum_{t=1}^s a_{ti}\right)^k$
and $\rho_i=A_i/B_i$. Note: $1\leq \rho_i\leq s^{k-1}$. CP can arrange to
pick $i\in [n]$ with probabilities $\{ B_i/B\}$, where, $B=\sum_iB_i$ as we
already saw. First pick $m=s^{k-2}/\varepsilon^3 $ sample $i\in [n]$
according to $\{ B_i/B\}$. Then, CP tells all servers all these $i$ and collects
all $a_{ti}$ and thence all $A_i,B_i$. Total communication is at most $ms\leq s^{k-1}/\varepsilon^3$.

The estimator of $A$ from one $i$ is
$X=\frac{BA_i}{B_i}.$
It is easy to see that $E(X)=A$, and 
$\text{Var}(X)\leq E(X^2) = B\sum_{i=1}^n \frac{A_i^2}{B_i}\leq ABs^{k-1},$
since each $A_i/B_i\leq s^{k-1}$.
So if we estimate $A$ by
$\tilde A=$ average of $m$ i.i.d. copies of $X$, then we would get
Var$(\tilde A)\leq \varepsilon^3 sAB$.

\begin{claim}\label{AgeqBs}
With a suitable choice of constants, if $A\ge sB)$, then,
$\prob\left( |\tilde A-A|\leq\varepsilon A\; \text{and}\; \tilde A\in (1-\eps) sB\right)\geq 1-c.$
Further, since $B$ is known and $\tilde A$ is computed the condition $\tilde A\le sB$ can be checked.
Conversely, if $A \le sB$, then,
$\prob\left( \tilde A\in sB/(1-\eps))\right)\geq 1-\eps.$
\end{claim}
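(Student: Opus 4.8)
The plan is to apply Chebyshev's inequality to the estimator $\tilde A$, using the two facts established just above the claim: $\E(\tilde A)=A$, and $\Var(\tilde A)\le \eps^3 sAB$ (the latter coming from $\Var(X)\le \E(X^2)\le ABs^{k-1}$ together with $\tilde A$ being the average of $m=s^{k-2}/\eps^3$ i.i.d.\ copies of $X$; a suitable constant factor in $m$, which does not change the $O(ms)=O(s^{k-1}/\eps^3)$ communication, will be absorbed here to drive the failure probability below any prescribed constant $c$). The one idea that matters is that which bound on $\Var(\tilde A)$ is the useful one depends on the regime: when $A$ is large we compare the variance to $A^2$, and when $A$ is small we compare it to $(sB)^2$.

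First I would handle the case $A\ge sB$. Here $sAB\le A^2$, so $\Var(\tilde A)\le \eps^3 A^2$, and Chebyshev's inequality gives $\Pr(|\tilde A-A|>\eps A)\le \eps^3 A^2/(\eps A)^2=\eps$; multiplying $m$ by a constant factor $1/c$ shrinks this bound to $c\eps\le c$. On the complementary event $|\tilde A-A|\le \eps A$ we moreover get $\tilde A\ge(1-\eps)A\ge(1-\eps)sB$, which is the second assertion; and since $B$ was communicated to the CP in round~1 and $\tilde A$ is computed in Step~2, the CP can indeed perform the threshold test of Step~3.

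Next, the converse case $A\le sB$. Now $sAB\le s^2B^2$, so $\Var(\tilde A)\le \eps^3 s^2B^2$. Since $A\le sB$ we have $\frac{sB}{1-\eps}-A\ge \frac{sB}{1-\eps}-sB=\frac{\eps sB}{1-\eps}\ge \eps sB$, so Chebyshev's inequality yields $\Pr\bigl(\tilde A>\tfrac{sB}{1-\eps}\bigr)\le \Pr\bigl(|\tilde A-A|>\eps sB\bigr)\le \eps^3 s^2B^2/(\eps sB)^2=\eps$, i.e.\ $\tilde A\le sB/(1-\eps)$ with probability at least $1-\eps$.

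There is no genuinely hard step here: the whole claim is a two-regime application of Chebyshev's inequality to a single averaged unbiased estimator. The only points needing a little care are (i) fixing the hidden constant in $m=\Theta(s^{k-2}/\eps^3)$ large enough that the ``large $A$'' branch fails with probability at most the prescribed constant $c$, while the communication stays $O(s^{k-1}/\eps^3)$, and (ii) observing that we only need this coarse dichotomy from Steps~2--3 — a genuine $(1+\eps)$-approximation of $A$ is not required here, since whenever $\tilde A$ falls below the threshold the algorithm proceeds to the finer sampling of Steps~4--6, which is analyzed separately.
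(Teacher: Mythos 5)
Your proposal is correct and follows essentially the same route as the paper: both cases are handled by Chebyshev's inequality applied to $\tilde A$ with $\E(\tilde A)=A$ and $\Var(\tilde A)\le \eps^3 sAB$, bounding the variance by $\eps^3A^2$ when $A\ge sB$ and comparing the deviation to $\eps sB$ (equivalently, the paper's $\eps\sqrt{sAB}\le\eps sB$) when $A\le sB$. Your explicit handling of the constant in $m$ to reach the prescribed failure probability $c$ is a minor refinement of what the paper leaves to ``a suitable choice of constants.''
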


\begin{proof}
In the first case, we have
$\Var(\tilde A)\le \varepsilon^3 A^2,$
from which the first assertion follows using Chebychev inequality. The checkability is clear.
For the second part,
$\prob (\tilde A\geq sB/(1-\eps))\leq \prob(\tilde A>A+ \varepsilon sB))
\leq\prob \left( \tilde A\geq A+\varepsilon \sqrt{sAB}\right)
\leq \frac{\text{Var}(\tilde{A})}{\eps^2 sAB}\leq \eps.$
\end{proof}

Given the claim, after this step, the algorithm either has found that $A > sB$ and $\tilde A$ is a
good estimate of $A$ and terminated or it knows that $A\leq sB$. So assume now $A\leq sB$.
CP now collects a set $S$ of $s^{k-1}(\ln s)^2/\varepsilon^3$ sampled
$i$ 's, each i.i.d. sampled according to $\{ B_i/B\}$. [It cannot
now afford to inform all servers of all $i\in S$.]

Let $\rho_i =A_i/B_i$. Let $\beta $ range over
$\{ s^{k-1}, e^{-\varepsilon }s^{k-1},e^{-2\varepsilon }s^{k-1},\ldots 1\}$, a total of
$O(\ln s)$ values and
let
$S_\beta = \{i\in S: \rho_i\in [\beta e^{-\varepsilon} ,\beta)\}.$
Then,
$\sum_{i\in S}\rho_i\approx \sum_\beta |S_\beta|\beta.$
Since each $\rho_i\geq 1$, we have $\sum_{i\in S}\rho_i\geq |S|$.
So we need only accurate estimates of those $|S_\beta|$ with $|S_\beta|
\geq \varepsilon |S|/\beta \ln s$. For each $S_\beta$, if we pick
u.a.r. a subset $T$ of $S$ of cardinality $\Omega(\beta(\ln s)^2/\eps^3)$, then
$\frac{|S|}{|T|}|T\cap S_\beta|$ estimates $|S_\beta |$ to within $(1\pm \eps)$ for every $\beta$ satisfying
$|S_\beta|\geq \varepsilon |S|/\beta\ln s$.

For each $\beta$, pick such a random subset $T$ from $S$. We have to recognize for each
$i\in T$, whether it is in $S_\beta$. First, for each $i\in T$, we estimate
$A_i$ as follows: We pick $l=s^{k-1}/\beta$ servers $t_1,t_2,\ldots ,t_l$
u.a.r. and take $Z_i=\frac{s}{l}(a_{t_1,i}+a_{t_2,i}+\cdots a_{t_l,i})$
as our estimate of $A_i^{1/k} = \sum_{t=1}^s a_{ti}$.
If $Y$ is the r.v. based on just one random server (namely $Y=sa_{ti}$
for a u.a.r $t$), then $\E Y=A_i^{1/k}$ and
\begin{align*}
\frac{\E(Y^2)}{(\E Y)^2} &= s^2\frac{\frac{1}{s}\sum_ta_{ti}^2}{(\sum_ta_{ti})^2}\\
&\leq s^2 \frac{\left( \frac{1}{s} \sum_t a_{ti}^k\right)^{2/k}}{A_i^{2/k}}\\
&\leq \frac{s^{2-(2/k)}}{\rho_i^{2/k}}\leq \frac{e^{2\eps/k}s^{2-(2/k)}}{\beta ^{2/k}}.
\end{align*}
From this it follows by averaging over $l$ samples that
\[
\frac{\E(Z^2)}{(\E Z)^2} \le \frac{e^{2\eps/k}s^{2-(2/k)}}{l \beta ^{2/k}} = \frac{e^{2\eps/k} s^{2-(2/k)}\beta}{s^{k-1}\beta^{2/k}} = e^{2\eps/k} \left(\frac{\beta}{s^{k-1}}\right)^{1-(2/k)} \le 1+O(\frac{\eps}{k}).
\]
We do $\Omega(k\ln s + \ln(1/\eps))$ such experiments
and take the median of all of these to drive down the failure probability for a single
$i$ to less than $\eps^2/s^k$, whence, it is small by union bound for the failure of any of
the at most $s^{k-1}/\eps^2$ indices $i$ 's. Thus all the $A_i^{1/k},i\in T$ are estimated to a
factor of $(1+\epsilon)$ by this process whp. Since $k$ is a fixed constant, this also
means $(1+O(\eps))$ relative error in the estimate of $A_i$.

Next we estimate the $B_i$ to within a factor of $\tilde{O}(s)$, i.e.,
$\frac{\tilde B_i}{B_i}\in \left[ \frac{1}{10s\ln s}, 1\right]$ whp.  We will see shortly that 
such an estimate suffices. 
For each $i\in S$, define
\begin{equation}\label{tofi}
t(i)=\text{ the index of the server which picked }i\; ;\; \tilde B_i=a_{t(i),i}^k.
\end{equation}
Then 
\[
\E(\tilde{B}_i) = \frac{\sum_{t} a_{t,i}^{2k}}{\sum_{t}a_{t,i}^k} \ge \frac{1}{s}\sum_{t} a_{t,i}^k = \frac{1}{s} B_i.
\]

We observe that 
\[
\prob (a_{t(i),i}^k\leq\delta \frac{B_i}{s})\leq\delta.
\]
Let $I_\delta(i)$ be an indicator random variable of whether
$\tilde B_i\leq \delta B_i$. Then $I_\delta(i)$ are independent:
the distribution of $S, t(i)$ is not changed if
we imagine picking  $S, t(i)$ as follows: we pick $|S|$ indices $i$ in i.i.d. trials, according
to $B_i/B$. Then for each $i$ picked, independently pick a $t(i)$, where
$\prob(t(i)=t) = a_{ti}^k / B_i$. From this, the independence of $I_\delta$
is clear. Therefore, by Chernoff the the  number of $B_i$ which are
much underestimated is small. 
Fixing $\delta = 1/(10\ln s)$, for each $\beta$, the number of $i'$ for which $B_i$ is underestimated by by less than $\delta B_i/s$ is at most a $\delta$ fraction.  

We now have estimates $\tilde \rho_i$ of each $\rho_i, i\in T$. We need to determine
from this $|S_\beta|$. From the bounds on estimation errors, we have $\tilde{\rho}_i \in [e^{-2\eps/k}\rho_i , 10(s\ln s)\rho_i]$.
Therefore, we see that only
$i\in T$ with $\rho_i\geq \beta /(10s\ln s)$ may be mistaken for an $i\in S_\beta$. We have
\[
|\{ i\in S: \rho_i\geq \frac{\beta}{10s\ln s}\}| \leq \frac{\sum_S \rho_i}{\beta/(10s\ln s)}.
\]
Moreover, 
\[
\E(\sum_{i \in S} \rho_i) = |S|\E(\rho_i) = |S| \sum_{i=1}^n \frac{A_i}{B_i} \frac{B_i}{B} = |S|\frac{A}{B} \le s|S|. 
\]
Therefore, 
\[
\E(|\{ i\in S: \rho_i\geq \frac{\beta}{s}\}|)  \leq \frac{10s^2\ln s|S|}{\beta}.
\]
The subset that intersects $T$ is then 
$\{ i\in T:\rho_i\geq \beta/s\}| \le 20 \frac{|T|}{|S|} \frac{s^2\ln s|S|}{\beta} =  O(s^2\ln^3 s/\eps^3)$. Now for these $i$ 's in $T$,
we collect all $a_{ti}$ and find $A_i,B_i$ exactly. This costs us
$O(s^3(\ln s)^3/\eps^2)$ communication. Thus the overall communication is bounded by  
\[
\frac{s^{k-1}}{\eps^3}  + \frac{s^{k-1}\ln^3 s}{\eps^3} + \frac{s^3\ln^3 s}{\eps^3} = O((s^{k-1} + s^3)\ln^3 s/\eps^3).
\] This is 
$\tilde{O}(s^{k-1}/\eps^3)$ for $k \ge 4$.
%
%
%
%
%

We have given the proof already of all assertions except the number of rounds. For the number
of rounds, the most crucial point is that though the algorithm as stated requires $O((\ln s)^c)$
rounds, we can instead deal with all $\beta$ simultaneously. CP just picks the $T$ for all of them
at once and sends them accross.
Also, we just make sure that CP communicates all choices of $t$ for each $i$ all in one round.
Also, note that
the $s^{k-2}$ sampling and checking if the first $\tilde A>\Omega(sB)$ can all be done in
$O(1)$ rounds, so also the $s^{k-1}$ sampling. Then the crude estimation of $\tilde \rho_i$
can be done in one $O(1)$ rounds followed by the finer sampling in $O(1)$ rounds.
\end{proof}

We now extend the above theorem and proof for a wide class of functions satisfying a weak Lipschitz condition (and generalizing the case of moments). 

For a monotone function $f: \R_+ \rightarrow \R_+$, define 
\[
L_f = \min r :  \forall x > y > 0, \frac{f(x)}{f(y)} \le \left(\frac{x}{y}\right)^r.
\]
Alternatively, $L_f$ is the Lipschitz constant of $f$ wrt the ``distance" $d(x,y) = \log(x)-\log(y)$, i.e., 
\[
L_f = \sup \frac{d(f(x),f(y))}{d(x,y)}.
\] 
 
For the function $f(x)=x^k$, we see that $L_f = k$. 

\begin{lemma}\label{lem:Lip}
For any function $f:\R_+ \rightarrow R_+$ with $L = L_f$, 
\[
\frac{f(\sum_{t=1}^s x_t)}{\sum_{t=1}^s f(x_t)} \le \frac{\left(\sum_{t=1}^s x_t\right)^L}{\sum_{t=1}^s x_t^L}
\] 
\end{lemma}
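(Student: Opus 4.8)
The plan is to reduce the inequality to a single monotonicity fact. Write $L = L_f$ and consider $g(x) = f(x)/x^L$ on $(0,\infty)$ (if $L_f = \infty$ there is nothing to prove, so assume $L$ finite). By the definition of $L_f$ as the least exponent $r$ with $f(x)/f(y) \le (x/y)^r$ for all $x > y > 0$, the choice $r = L$ gives $f(x)\,y^L \le f(y)\,x^L$, i.e. $g(x) \le g(y)$, whenever $x > y > 0$; together with the trivial case $x = y$ this says $g$ is non-increasing on $(0,\infty)$. The one point needing a word of justification is that the minimum in the definition of $L_f$ is actually attained, equivalently that the set of admissible exponents is closed; this holds because that set is an interval $[L_f,\infty)$ (for fixed $x>y$ the map $r \mapsto (x/y)^r$ is increasing, so each single constraint cuts out a closed ray, and the intersection is a closed ray), or one can instead take $r \downarrow L_f$ and pass to the limit in the inequality.

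Next I set $X = \sum_{t=1}^s x_t$ and first treat the case $x_t > 0$ for all $t$, so that $X > 0$ and $X \ge x_t$ for each $t$. Monotonicity of $g$ then yields $g(X) \le g(x_t)$, i.e.
\[
f(X)\, x_t^L \;\le\; f(x_t)\, X^L \qquad (t = 1,\dots,s).
\]
Summing these $s$ inequalities gives $f(X)\sum_{t=1}^s x_t^L \le X^L \sum_{t=1}^s f(x_t)$, and dividing by the (positive) quantities $\sum_t x_t^L$ and $\sum_t f(x_t)$ yields exactly the asserted bound.

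Finally I would dispose of vanishing coordinates. If all $x_t = 0$ the statement is degenerate, so assume some $x_t > 0$, hence $X > 0$. An index $t$ with $x_t = 0$ contributes $0$ to $\sum_t x_t^L$ and a nonnegative amount $f(0)$ to $\sum_t f(x_t)$, so discarding such indices leaves the right-hand ratio unchanged and can only increase the left-hand ratio; thus it suffices to prove the bound for the positive coordinates, which is the case already handled. The proof has essentially no hard step: the only subtlety is the attainment of the minimum defining $L_f$ (handled above), after which everything is the rearrangement ``$g$ non-increasing $\Rightarrow g(X) \le g(x_t)$, then sum over $t$''.
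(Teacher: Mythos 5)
Your proof is correct, and it is a cleaner packaging of the same underlying fact the paper uses. Both arguments reduce to the single inequality $f(X)/f(x_t)\le (X/x_t)^L$ with $X=\sum_t x_t$, which is just the definition of $L_f$ applied to the pair $(X,x_t)$. The paper organizes this differently: it writes the left side as $\frac{f(X)}{\sum_t x_t^L}\cdot\frac{\sum_t x_t^L}{\sum_t f(x_t)}$, bounds the second factor by an extremal term $x_t^L/f(x_t)$ (the mediant inequality; note the paper writes $\min_t$ where $\max_t$ is what the inequality actually gives, though the final expression is independent of $t$ so the conclusion is unaffected), and then applies the defining inequality once. You instead observe that $g(x)=f(x)/x^L$ is non-increasing, deduce $f(X)x_t^L\le f(x_t)X^L$ for every $t$, and sum --- which avoids the min/max step entirely and makes the term-by-term structure explicit. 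You are also more careful than the paper on two minor points: the attainment of the minimum in the definition of $L_f$ (the admissible exponents form a closed ray), and the treatment of vanishing coordinates. Neither point is essential, but both are legitimate gaps in the paper's terser write-up.
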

\begin{proof}
\begin{align*}
\frac{f(\sum_{t=1}^s x_t)}{\sum_{t=1}^s f(x_t)} &= \frac{f(\sum_{t=1}^s x_t)}{\sum_{t=1}^s x_t^L}\frac{\sum_{t=1}^s x_t^L}{\sum_{t=1}^s f(x_t)}\\ 
&\le  \frac{f(\sum_{t=1}^s x_t)}{\sum_{t=1}^s x_t^L}\min_{t} \frac{x_t^L}{f(x_t)}\\
&= \min_t  \frac{f(\sum_{t=1}^s x_t)}{f(x_t)} \frac{x_t^L}{\sum_{t=1}^s x_t^L}\\
&\le \min_t  \frac{(\sum_{t=1}^s x_t)^L}{x_t^L} \frac{x_t^L}{\sum_{t=1}^s x_t^L}\\
&=  \frac{(\sum_{t=1}^s x_t)^L}{\sum_{t=1}^s x_t^L}.
\end{align*}
\end{proof}

\begin{theorem}\label{thm:Lip}
Let $f$ be any nonnegative, superlinear real function with $L = L_f \ge 4$. Suppose there are $s$ servers, with server $t$ holding a
non-negative vector $a_t=(a_{t1},a_{t2},
\ldots ,a_{tn})$
Then,  to estimate
$A=\sum_{i=1}^nf\left(\sum_{t=1}^s a_{ti}\right)$
to relative error $\varepsilon$, there is an algorithm that communicates $O(s^{L-1}(\ln s)^3/\eps^3)$ words in $O(1)$ rounds.
\end{theorem}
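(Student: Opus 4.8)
The plan is to run the algorithm {\sc FrequencyMoments}$(L,\eps)$ from the proof of Theorem~\ref{freq-moments-thm-4} with the exponent $k$ replaced throughout by $L=L_f$, and with a single modification to the inner estimator for $A_i$; Lemma~\ref{lem:Lip} is exactly the tool that makes all the variance bounds survive this substitution. Set $B_i=\sum_t f(a_{ti})$, $A_i=f(\sum_t a_{ti})$, $\rho_i=A_i/B_i$, $B=\sum_iB_i$, $A=\sum_iA_i$. Since $f$ is superlinear it is superadditive, so $\rho_i\ge 1$; and by Lemma~\ref{lem:Lip} together with the power-mean inequality (here $L\ge4>1$), $\rho_i \le (\sum_t a_{ti})^L/\sum_t a_{ti}^L \le s^{L-1}$. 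Thus the $\rho_i$ lie in $[1,s^{L-1}]$, precisely the range exploited in the proof of Theorem~\ref{freq-moments-thm-4} for $f(x)=x^k$. The crude first phase (sample $m=s^{L-2}/\eps^3$ indices with probability $B_i/B$, estimate $A$ by $B$ times the average of the $\rho_i$, stop if this exceeds $sB$) then goes through verbatim: for $X=BA_i/B_i$ we have $\E(X)=A$ and $\Var(X)\le \E(X^2)=B\sum_i A_i^2/B_i \le ABs^{L-1}$, so $\Var(\tilde A)\le\eps^3 sAB$ and Claim~\ref{AgeqBs} (with $k\to L$) holds unchanged. The bucketing of the larger sample $S$ into the $O(\ln s)$ classes $S_\beta=\{i:\rho_i\in[\beta e^{-\eps},\beta)\}$, $\beta\in\{s^{L-1},e^{-\eps}s^{L-1},\ldots,1\}$, is also unchanged.

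The one genuinely new point is estimating $A_i$ for $i$ in a subsample $T$ of $S$. We keep the same estimator for the \emph{inner sum}: sample $l=s^{L-1}/\beta$ servers uniformly and form $Z_i=\frac{s}{l}\sum_{t\in L}a_{ti}$, an unbiased estimator of $\sigma_i:=\sum_t a_{ti}$. For $Y=sa_{ti}$ with $t$ uniform, $\frac{\E(Y^2)}{(\E Y)^2}=s\,\frac{\sum_t a_{ti}^2}{\sigma_i^2}$; by the power-mean inequality $\sum_t a_{ti}^2\le s^{1-2/L}\big(\sum_t a_{ti}^L\big)^{2/L}$, and Lemma~\ref{lem:Lip} gives $\sum_t a_{ti}^L\le \sigma_i^L/\rho_i$, so $\frac{\E(Y^2)}{(\E Y)^2}\le s^{2-2/L}/\rho_i^{2/L}$ — literally the bound from the proof of Theorem~\ref{freq-moments-thm-4} with $k=L$. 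Hence $l=s^{L-1}/\beta$ averaged samples and a median of $\Omega(L\ln s+\ln(1/\eps))$ repetitions give $\tilde\sigma_i=(1\pm O(\eps/L))\sigma_i$ for all $i\in T$ w.h.p., exactly as there. Now set $\tilde A_i=f(\tilde\sigma_i)$: using that $f$ is monotone and that $L_f$ bounds $f(x)/f(y)\le (x/y)^L$ for $x>y$, a relative error $\eps/L$ in $\sigma_i$ becomes a relative error at most $(1+\eps/L)^L-1=O(\eps)$ in $f(\sigma_i)=A_i$, i.e.\ $\tilde A_i=(1\pm O(\eps))A_i$. The estimator of $B_i$ is unchanged: with $t(i)$ the server that sampled $i$, take $\tilde B_i=f(a_{t(i),i})$; then $\tilde B_i\le B_i$, $\E(\tilde B_i\mid i)=\frac{\sum_t f(a_{ti})^2}{B_i}\ge B_i/s$, $\prob(\tilde B_i\le\delta B_i/s)\le\delta$, and the corresponding indicators are independent across $i\in S$, so by Chernoff (with $\delta=1/(10\ln s)$) only a $\delta$-fraction of $i\in T$ have $\tilde B_i<B_i/(10s\ln s)$. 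Therefore $\tilde\rho_i=\tilde A_i/\tilde B_i\in[e^{-O(\eps)}\rho_i,\ 10(s\ln s)\rho_i]$, exactly as in Theorem~\ref{freq-moments-thm-4}.

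From here the remaining steps are identical: only $i$ with $\rho_i\ge\beta/(10s\ln s)$ can be flagged for class $S_\beta$; since $A\le sB$ in the surviving branch, $\E(\sum_{i\in S}\rho_i)=|S|A/B\le s|S|$, so in expectation $O(s^2\ln s\,|S|/\beta)$ indices of $S$ qualify, hence $O(s^2\ln^3 s/\eps^3)$ of them land in $T$, and recomputing $A_i,B_i$ exactly for these costs $O(s^3\ln^3 s/\eps^3)$ words; then $\tilde s_\beta=|S_\beta\cap T|\,|T|/|S|$ estimates $|S_\beta|$ to $(1\pm\eps)$ whenever $|S_\beta|\ge\eps|S|/(\beta\ln s)$, which — since $\rho_i\ge1$ forces $\sum_{i\in S}\rho_i\ge|S|$ — is all that is needed for $B\sum_\beta\tilde s_\beta\beta$ to be a $(1\pm O(\eps))$ estimate of $A$. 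The round-compression argument (process all $\beta$ in parallel, batch all server queries into $O(1)$ rounds) carries over verbatim, and the total communication is $\frac{s^{L-1}}{\eps^3}+\frac{s^{L-1}\ln^3 s}{\eps^3}+\frac{s^3\ln^3 s}{\eps^3}=O\big((s^{L-1}+s^3)\ln^3 s/\eps^3\big)$, which is $O(s^{L-1}(\ln s)^3/\eps^3)$ for $L\ge4$. The only real obstacle is the $\sigma_i\mapsto A_i$ conversion: one must verify that the inner estimator retains the \emph{same} relative second moment $s^{2-2/L}/\rho_i^{2/L}$ as in the pure-moment case — which is exactly where Lemma~\ref{lem:Lip} enters — and that $L_f$, being a two-sided bound on the log–log Lipschitz constant of $f$, converts an $(\eps/L)$-accurate estimate of the argument into an $O(\eps)$-accurate estimate of the value.
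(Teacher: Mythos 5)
Your proposal is correct and follows essentially the same route as the paper: the two substantive points you isolate — that Lemma~\ref{lem:Lip} combined with the power-mean inequality preserves the bound $\E(Y^2)/(\E Y)^2 \le s^{2-2/L}/\rho_i^{2/L}$ for the inner-sum estimator, and that the definition of $L_f$ converts an $e^{\eps/L}$-multiplicative estimate of $\sum_t a_{ti}$ into an $e^{\eps}$-multiplicative estimate of $f(\sum_t a_{ti})$ — are exactly the two changes the paper's proof highlights, with the remaining phases carried over from the moments case verbatim.
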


The algorithm is the following, essentially the same as in the moments case, with parameters defined in terms of $L_f$ for a general function $f(.)$ in place of $x^k$. 

\begin{enumerate}
    \item Pick an i.i.d. sample $S_0$ of $m=s^{L-2}/\eps^3 $ indices $i$, where each $i$ is picked
        according to $\{ B_i/B\}$.
    \item Find average $\rho_i$ among the picked $i$ by getting all the $a_{ti}$
        and estimate $A$ by $\tilde A=B$ times this average.
    \item If $\tilde A\geq sB$, then declare it to be the final estimate of $A$ and stop.
    \item CP now gets an i.i.d. sample $S$ of $O(s^{L-1}\ln^2 s/\eps^3)$ $i$ 's, each according to $\{ B_i/B\}$.
    \item For each $\beta \in \{ s^{L-1},e^{-\varepsilon }s^{L-1},e^{-2\varepsilon }s^{L-1},\ldots ,1\}$,
        CP does the following:
            \begin{enumerate}
                \item Pick a subset $T$ of $S$ of cardinality $\Omega(\beta (\ln s)^2/\eps^3)$ u.a.r.
                \item For each $i\in T$, pick a set $T'$ of $l=\frac{s^{L-1}}{\beta}$ $t\in [s]$ u.a.r. Find all the $a_{ti},t\in T'$ and find
                    $f\left( \frac{s}{l}\sum_{t\in T'}a_{ti}\right)$. Repeat this $\Omega(L\ln s+\ln(1/\eps))$ times and take the median of all values found to be the estimate $\tilde A_i$ of $A_i$.
                \item For each $i\in T$, take  $\tilde B_i=a_{t(i),i}$, where, $t(i)$ is defined in (\ref{tofi}).
                \item For every $i\in T$ with $\tilde A_i/\tilde B_i\in [\beta e^{-\varepsilon },\beta)$, do an exact computation
                of $A_i,B_i$ by asking every server for all the $a_{ti}$ values.
                \item From the above estimate $|S_\beta \cap T$ and compute 
$\tilde{s}_\beta = |S_\beta \cap T| |T|/|S|$ as the estimate of $|S_\beta|$.
            \end{enumerate}
    \item Return $B\sum_\beta \tilde s_\beta \beta$ as the estimate of $A$.
\end{enumerate}

\begin{proof}(of Thm. \ref{thm:Lip}.)
We point out the changes in the analysis from the special case of moments. 
Now we have $A_i = f(\sum_{t=1}^s a_{ti}), B_i = \sum_{t=1}^s f(a_{ti})$ and $A = \sum_{i=1}^n A_i, B = \sum_{i=1}^n B_i$. Also $\rho_i = A_i/B_i$. 
The reader will have noticed that $k$ has been replaced by $L$ in the above algorithm. 
The first phase remains the same, and at the end we either have a good approximation for $A$ or we know that $A \le sB$. 

In the next phase we estimate $f(\sum_{t} a_{ti})^{1/L}$. To do this, we first estimate $\sum_{t=1}^s a_{ti}$, then apply $f$ to this estimate. We need to analyze the error of both parts. 
For the first part, let $Y=sa_{ti}$ as before. Then $\E(Y) = \sum_{t=1}^s a_{ti}$ and since the server used to define $Y$ is chosen uniformly at random, we have 
\begin{align*}
\frac{\E(Y^2)}{\E(Y)^2} &\le s^2 \frac{\frac{1}{s}\sum_{t=1}^s a_{ti}^2}{(\sum_{i=1}^s a_{ti})^2} \\
&\le s^2 \left(\frac{\frac{1}{s} \sum_{t=1}^s a_{ti}^L}{(\sum_{i=1}^s a_{ti})^L}\right)^{2/L} \\
&\le s^{2-(2/L)} \left(\frac{f(\sum_{t=1}^s a_{ti})}{\sum_{t=1}^s f(a_{ti})}\right)^{2/L} \quad \mbox{ (using Lemma \ref{lem:Lip})}\\ 
&= \frac{s^{2-(2/L)}}{\rho_i^{2/L}}.
\end{align*}
This is then bounded by $e^{2\eps/L}$ just as before. So we get an estimate of $a_i = \sum_{t=1}^s a_{ti}$ to within multiplicative error $e^{\eps/L}$. Let $\tilde{a}_i$ be this approximation. It remains to bound $f(\tilde{a}_i)$ in terms of $f(a_i)$. For this we observe that using the definition of $L$, if $a_i \le \tilde{a}_i$, then 
\[
1 \le \frac{f(\tilde{a}_i)}{f(a_i)} \le \left(\frac{\tilde{a}_i}{a_i}\right)^L \le e^\eps. 
\]
We get a similar approximation if $a_i > \tilde{a}_i$. 

The last phase for estimating $B_i$ and putting together the estimates for all the $\rho_i$ is again the same as in the case of moments.
\end{proof}

\noindent {\bf Acknowledgements. } We are grateful to Dick Karp and Jelani Nelson for helpful discussions, as well as the Simons
Institute at Berkeley. Santosh Vempala was supported in part by NSF award CCF-1217793 and David Woodruff by the XDATA program of the Defense Advanced
Research Projects Agency (DARPA), administered through Air Force Research Laboratory contract FA8750-12-C0323.  

\bibliography{spectralbook}
\bibliographystyle{plain}

\end{document}